\renewcommand{\a}{\alpha}
\newcommand{\poly}{{\rm poly}}
\newcommand{\UG}{\ensuremath{\mathcal{UG}}}
\newcommand{\NP}{\ensuremath{\mathcal{NP}}}
\newcommand{\coNP}{\ensuremath{{\rm co}\mathcal{NP}}}
\newcommand{\coRP}{\ensuremath{{\rm co}\mathcal{RP}}}
\renewcommand{\P}{\ensuremath{\mathcal{P}}}
\newcommand{\N}{{\mathbb N}}
\newcommand{\eps}{\epsilon}
\newcommand{\seq}{\subseteq}
\newcommand{\MIS}{MIS}
\newcommand{\HC}{HC}
\newcommand{\MT}{{\mathsf{MT}}}
\newtheorem{theorem}{Theorem}[section]
\newtheorem{corollary}[theorem]{Corollary}
\newtheorem{definition}[theorem]{Definition}
\newtheorem{lemma}[theorem]{Lemma}
\newtheorem{claim}[theorem]{Claim}
\newtheorem{fact}[theorem]{Fact}
\newtheorem{question}[theorem]{Question}
\newenvironment{proof-sketch}{\noindent{\bf Sketch of Proof}\hspace*{1em}}{\qed\bigskip}
\newenvironment{proof-idea}{\noindent{\bf Proof Idea}\hspace*{1em}}{\qed\bigskip}
\newenvironment{proof-of-lemma}[1]{\noindent{\bf Proof of Lemma #1}\hspace*{1em}}{\qed\bigskip}
\newenvironment{proof-of-claim}[1]{\noindent{\bf Proof of Claim #1}\hspace*{1em}}{\qed\bigskip}
\newenvironment{proof-of-thm}[1]{\noindent{\bf Proof of Theorem #1}\hspace*{1em}}{\qed\bigskip}
\newenvironment{proof-attempt}{\noindent{\bf Proof Attempt.}\hspace*{1em}}{\qed\bigskip}
\newenvironment{remark}{\noindent{\it Remark.}}{\bigskip}
\newenvironment{remarks}{\noindent{\it Remarks.}}{\bigskip}
\newcommand\pnote[1]{\iftrue{\color{red} (Pasin:~#1)}\fi}
\renewcommand{\leq}{\leqslant}
\renewcommand{\geq}{\geqslant}
\renewcommand{\epsilon}{\varepsilon}
\title{Multitasking Capacity: \\Hardness Results and Improved Constructions}
\begin{document}

\author{
Noga Alon%
  \thanks{Tel Aviv University and Princeton University
  \texttt{nogaa@post.tau.ac.il}}
    \and
Jonathan D. Cohen%
  \thanks{Princeton University
  \texttt{jdc@princeton.edu}}
    \and
Thomas L. Griffiths%
  \thanks{Princeton University
  \texttt{tomg@princeton.edu}}
    \and
Pasin Manurangsi%
  \thanks{University of California, Berkeley
  \texttt{pasin@berkeley.edu}}
    \and
Daniel Reichman%
  \thanks{Princeton University
  \texttt{daniel.reichman@gmail.com}}
    \and
Igor Shinkar%
  \thanks{Simon Fraser University
  \texttt{ishinkar@sfu.ca}}
    \and
Tal Wagner%
  \thanks{MIT, CSAIL \texttt{talw@mit.edu}}
    \and
Alexander Yu%
  \thanks{Princeton University
  \texttt{ajy@princeton.edu}}
}

\maketitle

\begin{abstract}

We consider the problem of determining the maximal $\alpha \in (0,1]$
such that every matching $M$ of size $k$ (or at most $k$) in a bipartite graph $G$ contains an induced matching of size at least $\alpha |M|$. This measure was recently introduced in \cite{alon2017graph} and is motivated by connectionist models of cognition as well as modeling interference in wireless and communication networks.%aimed at measuring how interference-resilient parallel architectures are, under interference models that arise from  and cognitive neuroscience.

We prove various hardness results for computing $\alpha$ either exactly or approximately. En route to our results, we also consider the maximum connected matching problem: determining the largest matching $N$ in a graph $G$ such that every two edges in $N$ are connected by an edge. We prove a nearly optimal $n^{1-\epsilon}$ hardness of approximation result (under randomized reductions) for connected matching in bipartite graphs (with both sides of cardinality $n$). Towards this end we define bipartite half-covers: A new combinatorial object that may be of independent interest. To the best of our knowledge, the best previous hardness result for the connected matching problem was some constant $\beta>1$.

Finally, we demonstrate the existence of bipartite graphs with $n$ vertices on each side of average degree $d$, that achieve $\alpha=1/2-\epsilon$ for matchings of size sufficiently smaller than $n/\poly(d)$. This nearly matches the trivial upper bound of $1/2$ on $\alpha$ which holds for any graph containing a path of length 3.
\end{abstract}

%\newpage
%\tableofcontents
%\newpage
%%%%%%%%%%%%%%%%%%%%%%%%%%%%%%%%%%%%%%%%%%%%%%%%%%%%%%%%%%%%%%%
\section{Introduction}
%%%%%%%%%%%%%%%%%%%%%%%%%%%%%%%%%%%%%%%%%%%%%%%%%%%%%%%%%%%%%%%

A matching in an undirected graph $G$ is a set of vertex disjoint edges.
Matchings have been used in studying interference effects in parallel and distributed systems. The object of study is typically a set of units that transmit or receive information. For example, in the communication setting, there is a bipartite network $G=(S,R,E)$ consisting of senders ($S$) and receivers ($R$)\footnote{To simplify matters, we consider the synchronous setting where transmissions occur in discrete time slots.}. Given a set $E'=(s_i,t_i)_{1 \leq i \leq \ell}\subseteq E$ every sender $s_i$ wishes to send a message to its neighbor $t_i$.
The main assumption is that in order to successfully receive a message, a receiver can have only a~\emph{single} incident edge carrying a message at a given time, as messages arriving on multiple incident edges create interference with each other.
This can formalized by a condition which we term the~\emph{matching condition}: A subset $E'\subset E$ can be used for concurrent interference-free communication if it forms a matching in $G$.
However, in several communication settings such as radio and wireless networks \cite{birk1993uniform,chlamtac1985broadcasting,alon2012nearly}, a more constrained setting is considered: the senders cannot choose which edges to broadcast, but instead, if they choose to transmit, then they automatically broadcast on~\emph{all} their incident edges.
This leads to the stronger~\emph{induced matching condition}: A subset $E'\subset E$ can be used for concurrent interference-free communication if it forms an~\emph{induced} matching in $G$, namely no two edges in $E'$ are connected by an edge in $G$.

A similar interference model, directed towards understanding multitasking constraints in neural systems, has been proposed recently in computational neuroscience~\cite{feng2014multitasking,Musslick2016b,Musslick2016a}. These works seek to understand the reason behind multitasking limitations: The limited ability of people to execute several actions concurrently, a ubiquitous finding in cognitive psychology \cite{shiffrin1977controlled}. The main idea in these works is that such limitations arise from interference effects between interrelated units and not because of limited resources (e.g., limited attention or constrained working memory). The models in
~\cite{feng2014multitasking,Musslick2016b,Musslick2016a} follow the \emph{connectionist approach} to cognition \cite{rumelhart1986general,rumelhart1987parallel} which strives to explain cognition in terms of a large network of interconnected units that operate in a parallel and distributed manner and have also played a pivotal role in the study of neural networks \cite{hinton1990connectionist}. In~\cite{feng2014multitasking,Musslick2016b,Musslick2016a} a formal model to study multitasking is provided where given a bipartite graph $G=(S,R,E)$,
every vertex $s \in S$ is associated with a set of inputs $I_s$, every vertex $t\in T$ is associated with a set of outputs $O_t$ and the edge $(a,b)$ is associated with a function (``task") $f_{s,t}:I_s\rightarrow O_t$. As before, it is assumed that every unit in $S$ transmits its computed value to all adjacent neighbors in $T$, and that a value is stored without interference in a node $t \in T$ only if it receives at most one value from a single unit in $S$. In other words, given a set of $\ell$ of edges $E'=(s_i,t_i)_{1 \leq i \leq \ell}$ the set of functions $f_{s_i,t_i}$ can be performed concurrently (``multitasked") if $E'$ is an induced matching. These works are noteworthy as they relate graph theoretic properties of interconnected units and cognitive performance (multitasking). Perhaps surprisingly, hardly any works originating from connectionist models have studied how
graph-theoretic properties relate to cognitive models arising from experimental findings or computer simulations.

Based on these interference assumptions, a new measure has been proposed to capture how well such networks allow for interference-free processing~\cite{alon2017graph}. The idea behind this measure is to consider a parameter $k \leq n$, and ask whether~\emph{every} matching $M$ of size $k$ (or of size at most $k$)
contains a large~\emph{induced} matching $M'\subseteq M$.
Unless stated otherwise we will always assume that graphs are bipartite and
that both sides of the bipartition have cardinality $n$.

\begin{definition}
Let $G=(A, B,E)$ be a bipartite graph, and let $k \in \N$ be a parameter.
For $\a \in (0,1]$ we say that $G$ is a~\emph{$(k,\a)$-multitasker}
if for every matching $M$ in $G$ of size $|M| = k$,
there exists an induced matching $M' \subseteq M$ such that
\[
    |M'| \geq \a |M|.
\]
Define $\a_k(G)$ to be the maximal $\a$ such that $G$ is a $(k,\a)$-multitasker if $G$ contains a matching of size $k$,
and define $\a_k(G) = 1$ if $G$ does not contain a matching of size $k$.
We call the parameter $\a_k \in (0,1]$ the \emph{multitasking capacity of $G$ for matchings of size $k$}.

Also, define $\a_{\leq k}(G) = \min_{1 \leq \ell \leq k} \a_\ell(G)$
and call it the~\emph{multitasking capacity of $G$ for matchings of size at most $k$}.\footnote{Since we consider the minimum, the definition of $\a_{k}$ ensures that values of $r \leq k$ for which there is no matching of size $r$
have no influence on $\a_{\leq k}(G)$.}
\end{definition}
The parameters $\a_k, \a_{\leq k}$ measure how resilient $G$ is to interferences. The larger these parameters are,
the better $G$ is considered as a multitasker.

\medskip

One motivation for this definition is that it is sometimes assumed (e.g., \cite{feng2014multitasking}) that the set of tasks (edges) that need to be multitasked are restricted to be a matching, a restriction which is imposed by limitation on the parallelism of the network. In this case, the multitasking capacity quantifies for~\emph{every} set $S'$ of allowable tasks what fraction of tasks from $S'$ are guaranteed to be achieved without interference. Another motivation is the distinction between interference effects that result from a violation of the matching condition to those that result from a violation of the \emph{induced} matching condition. In particular, the above multitasking measure allows us to assess the fraction of tasks that can be performed concurrently conditioned on not violating the matching condition.

In~\cite{alon2017graph} several properties of $\a_{\leq n}(G)$ have been proven. For example, it was shown that $\a_{\leq n}(G)\leq \frac{9}{\sqrt{d}}$ for $d$-regular graphs, and that $\a_{\leq n}(G)\leq O((\frac{\log n}{d})^{1/3})$ for graphs of average degree $d$.
This upper bound supports the previous hypothesis~\cite{feng2014multitasking} suggesting that as the average degree increases, the multitasking capacity inevitably decreases, regardless of the structure of the network -- a phenomenon referred to as the ``multiplexing versus multitasking trade-off''%
\footnote{In the irregular case this holds assuming the average degree satisfies $d \gg \log n$.}.
It was also shown in~\cite{alon2017graph} how to construct graphs with desirable multitasking properties. Namely graphs for which $\a_{\leq k}(G) \geq \tau$ for $\tau=\Omega(1)$ provided that $k=O(n/d^{1+\tau})$, where $d$ is the average degree of $G$.
%The idea behind this construction was to use locally sparse bipartite graphs (with each side of cardinality $n$),
%where for every subset $X$ of vertices of size $O(n/d^{1+\tau})$, the graph spanned by $X$ has average degree $2/\tau-1$. Thereafter, standard arguments (contracting the edges of a matching of size at most $k$ coupled with lower bounds on the size of an independent set in a graph with average degree at most $\frac{2}{\tau}-1$) imply that $\a_{\leq k}(G) \geq \tau$.

The results in~\cite{alon2017graph} leave several questions.

\begin{question}\label{q:compute-alpha}
    Given a graph $G$ and a parameter $k$, can we compute $\a_k(G)$ or $\a_{\leq k}(G)$ efficiently?
\end{question}
Indeed, if we are to use $\a_k(G)$ or $\a_{\leq k}(G)$ to evaluate how prone to interference parallel architectures are, then a natural question is whether it is possible to compute or approximate these quantities in polynomial time. For example, computer simulations are frequently used in developing connectionless models and these models often consist of networks consisting of dozens (or more) of units. Hence to evaluate the usefulness of $\a_k(G)$ in connectionist models of multitasking it is desirable to have efficient methods to compute $\a_k(G)$ exactly or approximately.

Another question is whether it is possible construct multitaskers with near-optimal capacity.
While~\cite{alon2017graph} provide multitaskers with $\a_{\leq k}(G)=\Omega(1)$ for $k\leq n/d^{O(1)}$ (and show that the upper bound on $k$ is tight up to the degree of the $d^{O(1)}$ term), the best constant value of $\a_{\leq k}(G)$ they achieve is bounded away from the natural barrier\footnote{Observe that if a network contains a path of length $3$ then trivially $\a_{\leq k}(G) \leq 1/2$ for all $k \geq 3$.} $\a_{\leq k}(G)\leq1/2$.
We thus raise the following question.
\begin{question}\label{q:construct-alpha-0.5}
    Is there an infinite family of graphs $G_n$ of average degree $d$ such that $\a_{\leq k}(G_n) \geq 1/2-\eps$
    for arbitrarily small $\eps > 0$ and $k \geq n/d^{f(\epsilon)}$ for some function $f$?
\end{question}

In this paper we address these two questions.
For \cref{q:compute-alpha} we show that under standard complexity theoretic assumptions $\a_k(G)$ and $\a_{\leq k}(G)$ cannot be computed efficiently, thus giving a negative answer to this question.
For \cref{q:construct-alpha-0.5} we give a positive answer, by showing how to construct bipartite graphs with multitasking capacity approaching $1/2$.

\subsection{Our results}
%%%%%%%%%%%%%%%%%%%%%%%%%%%%%%%%%%%%%%%%%%%%%%%%%%%%%%%%%%%%%%%
As it turns out, a useful notion in studying the computational hardness of computing the multitasking capacity is that of a~\emph{connected matching}, which is a matching in which every two edges are connected by a third edge (see Definition~\ref{def:connected_matching} for a formal definition). Connected matchings have been studied in several contexts, such as Hadwiger's conjecture~\cite{kawarabayashi2005improvements,plummer2003special,furedi2005connected}. Motivated by applications to other optimization problems~\cite{jobson2014connected}, algorithms for finding connected matchings of maximum cardinality have been studied in special families of graphs such as chordal~\cite{cameron2003connected} and bipartite chordal graphs~\cite{jobson2014connected}\footnote{Observe that bipartite chordal graphs are not necessarily chordal. See~\cite{jobson2014connected} for details.} and bipartite permutation graphs~\cite{golovach2014hadwiger}.

In \cref{sec:connected} we establish hardness of approximation for the size of the largest connected matching to within a factor of $n^{1-\epsilon}$ assuming $\NP\neq\coRP$. Previously, this problem was known to be $\NP$-hard to approximate within some constant factor~\cite{plummer2003special} for general (non-bipartite) graphs. We also prove that deciding whether a bipartite graph $G=(A,B,E)$ with $|A|=|B|=n$
contains a connected matching of size $n$ is $\NP$-hard.

In \cref{sec:hardness} we prove several hardness results for computing the multitasking capacity. To be more precise, we define the decision problem of computing the multitasking capacity as follows: %For this we rely on the following decision problem:
\begin{definition} \label{def:mt}
Let $\MT$ be the problem of deciding whether for a given graph $G$, a positive integer $k \in \N$ and a rational number $\eta>0$ it holds that $\alpha_{k}(G) \geq \eta$.
\end{definition}
%Similarly, let $\MT$ be the problem of deciding whether for a given graph $G$ and $k \in \N$ it holds that or $\alpha_{\leq k}(G)$ is greater or equal to some parameter.
The problem $\MT$ belongs to the second level of the polynomial hierarchy, $\Pi_2$,  since the statement $\alpha_{k}(G) \geq \eta$ can be expressed as $\forall M \exists M'  P(G,k; M,M')$, where $P$ is the predicate checking that $M$ is a matching in $G$ of size $k$, and $M'$ is an induced matching, which is clearly computable in time $\poly(|G|)$.
We note that it is not clear whether it belongs to $\NP$ or to $\coNP$, and in fact, we give evidence that $\MT$ belongs to neither of the classes.
Specifically, we show that $\MT$ is both $\NP$-hard and $\coNP$-hard; thus, if $\MT \in \NP \cup \coNP$, then the polynomial hierarchy collapses to the first level.

Furthermore, we show various hardness of approximation results for computing $\alpha_k(G)$ and $\alpha_{\leq k}(G)$.
Most notably, we show under standard complexity theoretic assumptions that
(1) $\alpha_n(G)$ is inapproximable to within $n^{1-\epsilon}$ for any $\epsilon>0$, and, (2) $\alpha_{\leq k}(G)$ inapproximable to within any constant for $k=n^{1-\eps}$ for any $\eps>0$. Furthermore, under a stronger assumption, we improve the inapproximability ratio for $\alpha_{\leq k}(G)$ to $n^{1/\mathrm{polyloglog}(n)}$ for $k=n^{1-1/\mathrm{polyloglog}(n)}$.
Our hardness results are summarized in \cref{tbl:hardness}.

\begin{table}
 \begin{tabular}{||c c c c c ||}
 \hline
 Variant & Assumption & $k$ & $f$ (approximation factor) & Remarks \\ [0.5ex]
 \hline\hline
% $\alpha_k(G)$ & $\P\neq\NP$ & input & $f<2$ & $k$ is part of the input  \\
% \hline
 $\alpha_k(G)$ & $\P\neq\NP$ & $n$ & $n^{1-\epsilon}$ for any $\epsilon>0$ &   \\
 \hline
 $\alpha_k(G)$ & $\P\neq\NP$ & $n$ & $O(d/\poly\log d)$ & $G$ has maximum degree $d$ \\
 \hline
 $\alpha_{\leq k}(G)$ & $\NP\neq\coRP$ & $n$ & some constant & \\
 \hline
 $\alpha_{\leq k}(G)$ & $\NP\neq\coRP$ & $n^{1-\epsilon}$ & arbitrarily large constant & \\
 \hline
 $\alpha_{\leq k}(G)$ & ETH & $n^{1-1/\mathrm{polyloglog}(n)}$ & $n^{1/\mathrm{polyloglog}(n)}$ & \\
 \hline
\end{tabular}
\caption{Hardness of approximation results for computing the multitasking capacity. In each row, the stated variant of the multitasking capacity (either $\alpha_k(G)$ or $\alpha_{\leq k}(G)$) is hard to approximate under the stated assumption up to a multiplicative factor $f$, for the stated values of $k$ and $f$.}
\label{tbl:hardness}
\end{table}

In \cref{sec:construction}, we prove the existence of multitaskers with near-optimal capacity. For integers $d,n$ with $n \geq d$ and $\epsilon \in (0,1)$, we show how to construct multitasker graph $G$ on $2n$ vertices with average degree $d$ and $\alpha_{\leq k}(G) \geq 1/2-\epsilon$, where $k = \Omega(n/d^{1+O(1/\epsilon)})$. In particular, for $d=n^{o(1)}$ this implies that $\epsilon$ can be taken to be $o(1)$, and thus $\alpha_{\leq k}(G)$ tends to its natural barrier $1/2$ as $n$ grows.

\subsection{Our techniques}

\paragraph{Hardness results.}
With respect to multitasking, connected matchings are the worst possible multitasking configuration for a matching of size $k$. In particular, it holds trivially that $\a_k(G)\geq1/k$ and $\a_{\leq k}(G) \geq 1/k$, and the equality holds if and only if $G$ contains a connected matching of size $k$. This fact, together with extremal Ramsey bound on the size of independent sets, turns out to be instrumental in proving hardness results for computing the multitasking capacity.

\paragraph{Construction of multitaskers.}
The starting point of our multitaskers with nearly optimal multitasking capacity is based on locally sparse graphs, similarly to~\cite{alon2017graph}.
They used the local sparsity with Turan's lower bound on independent sets in graphs with a given average degree in order to establish the existence of sufficiently large independent sets (which translate to induced matchings). However, the use of Turan's bound necessarily entails a constant loss, which makes the final multitasking capacity bounded away from $1/2$. We circumvent this roadblock by also requiring that the graph has large girth, and use this fact in order to carefully construct a large independent set.

\section{Preliminaries}\label{sec:prelim}
%%%%%%%%%%%%%%%%%%%%%%%%%%%%%%%%%%%%%%%%%%%%%%%%%%%%%%%%%%%%%%%
All graphs considered in this work are undirected.
A matching in a graph $G=(V,E)$ is a collection $M \seq E$ of vertex disjoint edges.
We say that a vertex $v \in V$ is \emph{covered} by $M$ if it is one of the endpoints of an edge in $M$.
We say that a matching $M$ is \emph{induced in $G$} if no two edges in $M$ are connected by an edge in $E$,
i.e., the vertices in $M$ span only the edges in $M$ and no other edges.
Given a graph $G$ and an edge $e = (u,v) \in E$, we define the \emph{contraction} of $e$ to be the operation that
produced the graph $G \setminus e$, whose vertex set is $(V\cup v_{e}) \setminus \{u,v\}$, the vertex $v_{e}$ is connected to all vertices in $G$ neighboring $u$ or $v$,
and for all other vertices $x,y \in V \setminus\{u,v\}$, they form an edge in $G \setminus e$ if and only if they were connected in $G$.
Contracting a set of edges, and in particular contracting a matching, means contracting the edges one by one in an arbitrary order%
\footnote{We remark that the graph obtained from contracting a set of edges, indeed, does not depend on the order.}.

Below we define two combinatorial optimization problems that we will relate to when proving hardness of approximation results for the parameters $\a_k$ and $\a_{\leq k}$.

\begin{definition}\label{def:mis}
Given an undirected graph $G$, an \emph{independent set} in $G$ is a set of vertices that spans no edges.
The Maximum Independent Set Problem ($\MIS$) is the problem of finding a maximum cardinality of an independent set in $G$.
\end{definition}

\begin{definition}\label{def:biclique}
Given a graph $G = (V,E)$, we say that two disjoint subsets of the vertices $A,B \seq V$ form a \emph{bipartite clique (biclique)} in $G$ if $(a,b) \in E$ for all $a \in A$ and $b \in B$.
We say that the biclique $(A,B)$ is balanced if $|A| = |B|$.
In the Maximum Balanced Biclique Problem we are given a bipartite graph $G$ and a parameter $k$, and the goal is to decide whether $G$ contains a balanced biclique with $k$ vertices on each size.
\end{definition}

\begin{definition}\label{def:connected_matching}
Given a graph $G$, a \emph{connected matching} in $G$ is a matching $M$ such that every two edges in $M$ are connected by an edge in $G$.
We use $\nu_c(G)$ to denote the size of the maximum cardinality of a connected matching in $G$.
In the Connected Matching Problem, we are given graph $G$ and parameter $k$ and our goal is to determine whether $\nu_c(G) \geq k$.
\end{definition}
%\igor{erase def from the intro.}

%% A (binary) language $L$ belongs to $\Pi_2$ if there is a relation $3$-ary relation $R$ such that $L=\{x:\forall y_1\exists y_2 R(x,y_1,y_1)\}$ where the length of $y_1,y_2$ is guaranteed to be polynomial in $|x|$. %%% -- move to inline in the correspoinding place.

Given an optimization (minimization or maximization) problem $\Pi$ over graphs, we denote by $OPT_\Pi(G) > 0$ the value of the optimal solution of $\Pi$ for $G$.
An algorithm $A$ for a maximization (minimization) problem is said to achieve an approximation ratio $\rho > 1$ if for every input $G$
the algorithm returns a solution $A(G)$ such that $OPT_\Pi(G)  \geq A(G) \geq OPT_\Pi(G)/\rho$ (resp.\ $OPT_\Pi(G) \leq A(G) \leq \rho \cdot OPT_\Pi(G)$).

We assume familiarity with complexity classes such as $\NP, \coNP, \coRP, \Pi_2$, and the polynomial-time hierarchy. Precise definitions of these terms are omitted, and can be found, e.g., in \cite{papadimitriou2003computational}.

%%%%%%%%%%%%%%%%%%%%%%%%%%%%%%%%%%%%%%%%%%%%%%%%%%%%%%%%%%%%%%%%%%%%%%%
\section{Hardness results for maximum connected matchings}\label{sec:connected}
%%%%%%%%%%%%%%%%%%%%%%%%%%%%%%%%%%%%%%%%%%%%%%%%%%%%%%%%%%%%%%%%%%%%%%%

In this section, we prove hardness results for finding large connected matchings in graphs.

%%%%%%%%%%%%%%%%%%%%%%%%%%%%%%%%%%%%%%%%%%%%%%%%%%%%%%%%%%%%%%%
\subsection{Hardness of approximating the size of a maximum connected matching}
%%%%%%%%%%%%%%%%%%%%%%%%%%%%%%%%%%%%%%%%%%%%%%%%%%%%%%%%%%%%%%%

We start by showing an almost optimal hardness of approximation result for the connected matching problem.

\begin{theorem}\label{thm:connected-match-inapprox}
Given a bipartite graph $G$ with $n$ vertices on each side,
it is $\NP$-hard to approximate $\nu_c(G)$ within a factor of $n^{1-\eps}$ for any $\eps > 0$
under a randomized polynomial time reduction.

More precisely, given a bipartite graph $G$ with $n$ vertices on each side,
it is $\NP$-hard to distinguish between the case where $\nu_c(G) \geq n^{1-\eps}$
and the case where $\nu_c(G) \leq n^{\eps}$ for any $\eps > 0$.
\end{theorem}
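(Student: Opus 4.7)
The plan is a randomized, gap-preserving reduction that maps an instance of a canonical $\NP$-hard graph problem---most naturally Maximum Balanced Biclique, which is known to be $\NP$-hard to approximate within $n^{1-\varepsilon}$ under randomized reductions---to a bipartite graph $G$ on $n+n$ vertices with the property that $\nu_c(G) \geq n^{1-\varepsilon}$ in YES instances and $\nu_c(G) \leq n^{\varepsilon}$ in NO instances. Completeness is essentially free on the biclique side: a balanced biclique on $(k,k)$ vertices already contains a connected matching of size $k$, since any internal perfect matching is joined pair-wise by both cross-edges. The real content is soundness.

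Soundness is delicate because a connected matching is strictly weaker than a biclique---it requires only \emph{one} cross-edge per pair of matched edges---so the identity reduction is not sound. To bridge this I would introduce the bipartite half-cover advertised in the abstract. I would define a \emph{bipartite half-cover} to be a bipartite graph $C=(P,Q,F)$ with $|P|=|Q|=N$ that is as edge-dense as possible subject to containing no connected matching of size more than $N^{\varepsilon}$. Half-covers with near-extremal density should follow from a probabilistic construction: sample each edge of $C$ independently with a carefully tuned density $p$, then apply a union bound over the at most $\binom{N}{s}^{2}\,s!$ potential size-$s$ matchings, each requiring $\binom{s}{2}$ further cross-edges to be connected. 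This probabilistic step is the origin of randomness in the overall reduction.

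The final reduced graph $G$ would then compose the biclique source instance $H$ with a fresh random half-cover $C$---by, say, a labelled union or product that keeps $H$-edges and $C$-edges logically distinguishable---so that the biclique-induced connected matching in $H$ still sits inside $G$ (preserving completeness), while any large connected matching $M$ in $G$ partitions its $\binom{|M|}{2}$ pair-connecting edges between $H$ and $C$. By pigeonhole, either the $H$-contribution forces a biclique-like substructure in $H$ of size proportional to $|M|$, contradicting the soundness promise, or the $C$-contribution is large, contradicting the half-cover property. Calibrating the density of $C$ and the size of the composition transfers the $n^{1-\varepsilon}$ biclique gap to the $n^{1-\varepsilon}$ vs $n^{\varepsilon}$ gap for $\nu_c$.

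The main obstacle I anticipate is the quantitative balance in the half-cover construction: $p$ must be large enough to be useful in the composition yet small enough that connected matchings of size $N^{\varepsilon}$ are ruled out by the union bound, and the combinatorial enumeration of potential connected matchings (finer than that of plain matchings or bicliques, because of the cross-edge constraints) must be handled with some care. A secondary subtlety is engineering the composition of $H$ and $C$ so that they cannot cooperate to produce a large connected matching neither could produce alone, which likely requires placing $H$ and $C$ on carefully chosen shared or disjoint vertex sets.
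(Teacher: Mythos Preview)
Your high-level architecture is right---reduce from a hard problem, build a random gadget with small $\nu_c$, compose---but the specific choices you sketch diverge from the paper's in ways that matter, and as written your soundness argument has a real gap.

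First, the source problem: the paper reduces from \emph{Maximum Clique}, not Balanced Biclique. This is not incidental. The reduced graph carries a canonical ``identity matching'' $\{(u_i,v_i):i\in[n]\}$, and a subset of these identity edges is connected exactly when the corresponding vertices form a clique in the source graph $G$. Biclique has no such clean one-index correspondence.

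Second, your notion of half-cover is not the paper's. A \emph{bipartite half-cover of $K_n$} is not ``as edge-dense as possible subject to small $\nu_c$''; it is a bipartite graph on $\{u_i\}\cup\{v_i\}$ with no identity edges and the covering property that for every pair $\{i,j\}$, at least one of $(u_i,v_j),(u_j,v_i)$ is present. The random construction is: for each pair independently, flip a fair coin to pick one of the two cross-edges---essentially a random tournament. The bound $\nu_c(\HC_n)=O(\log n)$ then follows from a union bound over non-repetitive matchings of size $D$, using that any two such edges are connected with probability exactly $3/4$ independently. Your density-$p$ sampling is a different object and a different calculation.

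Third, the composition $G\boxminus H$ is precise: include all identity edges $(u_i,v_i)$, and include a non-identity edge $(u_i,v_j)$ iff it lies in the half-cover $H$ \emph{and} $(i,j)\in E_G$. The half-cover property guarantees completeness (any clique yields a connected identity matching); intersecting with $E_G$ is what gives soundness.

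Finally, your soundness sketch pigeonholes on the $\binom{|M|}{2}$ \emph{cross}-edges, but this does not work: many $H$-cross-edges between matched pairs do not yield a biclique in $H$, and many $C$-cross-edges do not yield a connected matching in $C$ when the matched edges themselves may not lie in $C$. The paper instead partitions the \emph{matching edges} into identity edges $M_\parallel$ and non-identity edges $M_\times$. The set $M_\parallel$ directly gives a clique in $G$, so $|M_\parallel|\le\omega(G)$; and $M_\times$, after passing to a non-repetitive submatching of at least a third its size, is a connected matching entirely inside the half-cover, so $|M_\times|\le 3\nu_c(H)=O(\log n)$. That is the missing idea.
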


%\igor{TODO: show why the naive reduction from clique does not work}
A natural approach to prove hardness of approximation results for connected matching is to reduce the clique problem to it. Namely given a graph $G = ([n], E_G)$ for which we wish to determine if $G$ contains a $k$-clique, replace every vertex $i$ by an edge $e_i=(u_i, v_i)$ and add two edges $(u_i, v_j)$ and $(u_j, v_i)$ for every edge $(i, j)$ in $G$. Call the resulting graph after these transformation $G'$. While it is clear that a large clique in $G$ translates to a large connected matching in $G'$, it is not clear that a large connected matching in $G'$ implies a large clique in $G$. The difficulty is that a connected matching might contain ``bad" edges of the form $(u_i,v_j)$ where $i \ne j$. An illustrative example is the case where $G = K_{n/2, n/2}$ is a biclique; in this case, the largest clique in $G$ has size only $2$ but the resulting graph $G'$ contains a large connected matching of size as large as $n$.

To overcome this problem, we first observe that instead of adding both $(u_i, v_j)$ and $(u_j, v_i)$ to the graph $G'$ for every edge $(i, j)$ in $G$. It suffices to add only one of the two to retain a large connected matching in the YES case. Then, the insight is that, when we choose the edge to add independently at random for each $(i, j)$, we can control the number of bad edges in every connected matching in $G'$

%The proof of \cref{thm:connected-match-inapprox} follows by a reduction from the gap-clique problem, and relies on the following gadget.

We formalize the described ideas below, starting with the main gadget of our reduction:

\begin{definition}
Fix $n \in \N$. A bipartite graph $\HC_n = (A = \{u_1, \dots, u_n\}, B = \{v_1, \dots, v_n\}, E_H)$ is said to be a \emph{bipartite half-cover of $K_n$}
if (1) for every $\{i, j\} \subseteq [n]$, $(u_i, v_j) \in E_H$ or $(u_j, v_i) \in E_H$, and (2) for every $i \in [n]$, $(u_i, v_i) \notin E_H$.
\end{definition}

The reduction used in the proof of \cref{thm:connected-match-inapprox} uses the existence of such bipartite half-covers of $K_n$ that do not contain a large connected matching.
Such graphs can be easily constructed using a randomized algorithm as shown below.

\begin{claim} \label{claim:random-half-cover-small-nu-c}
There is an $O(n)$-time randomized algorithm that on input $n \in \N$ outputs a graph $HC_n$, which is a bipartite half-cover of $K_n$ such that $\nu_c(HC_n) \leq O(\log n)$ with probability $1 - o(1)$.
\end{claim}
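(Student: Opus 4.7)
The plan is to use the obvious random construction: for each unordered pair $\{i,j\}\subseteq[n]$ with $i<j$, independently flip an unbiased coin and include in $E_H$ exactly one of $(u_i,v_j)$ or $(u_j,v_i)$. Both half-cover conditions hold by construction; the sampling takes $\Theta(n^2)$ time, which is necessary since the output has $\binom{n}{2}$ edges (the stated $O(n)$-time appears to be a typo).

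To bound $\nu_c(HC_n)$ I apply the first-moment method at threshold $k=C\log n$ for a sufficiently large constant $C$. A candidate matching of size $k$ is encoded by $(S_A,S_B,\pi)$ with $S_A=\{a_1,\dots,a_k\}$, $S_B=\{b_1,\dots,b_k\}\subseteq[n]$, and $\pi:S_A\to S_B$ a bijection satisfying $\pi(a_i)=b_i\ne a_i$; there are at most $\binom{n}{k}^2 k!\le n^{2k}$ such triples. Viewing $HC_n$ as the uniformly random tournament on $[n]$ that orients pair $\{i,j\}$ as $i\to j$ iff $(u_i,v_j)\in E_H$, the event that a fixed triple yields a connected matching in $HC_n$ is the conjunction of $k$ \emph{matching orientations} $a_i\to b_i$ and $\binom{k}{2}$ \emph{connectedness disjunctions} ``$a_i\to b_j$ or $a_j\to b_i$'' for each $i<j$. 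In the generic case $S_A\cap S_B=\emptyset$ all pairs involved are distinct and valid, the events are mutually independent, and the probability equals $2^{-k}(3/4)^{\binom{k}{2}}$ exactly.

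In the degenerate regime, any 2-cycle in $\pi$ contributes probability $0$ (it would demand the same pair be oriented in two opposite ways), so assume none exists. Since the $a_i$'s and the $b_i$'s are each distinct, any unordered pair $\{x,y\}$ serves as a cross-pair in at most two of the $\binom{k}{2}$ connectedness constraints, so the ``constraint-conflict graph'' on the cross-constraints (edges between those sharing a pair) has maximum degree $2$ and admits an independent set $I$ of size at least $\binom{k}{2}/3$. The constraints in $I$ involve pairwise disjoint pair-variables and each holds with conditional probability at most $3/4$---the only possible coincidence $\{a_i,b_j\}=\{a_s,b_s\}$ (necessarily with $s\notin\{i,j\}$) requires $a_i=b_s$ and $b_j=a_s$, in which case the matching orientation $a_s\to b_s$ is exactly the \emph{opposite} of the cross-orientation $a_i\to b_j$ and so only \emph{restricts} the set of satisfying orientations. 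Multiplying and applying the union bound gives
\[
\Pr[\nu_c(HC_n)\ge k]\;\le\;n^{2k}\cdot 2^{-k}(3/4)^{\binom{k}{2}/3}\;=\;\exp\bigl(2k\ln n-\Omega(k^2)\bigr)\;=\;o(1)
\]
for $C$ sufficiently large. The main obstacle is the case analysis underlying the ``$\le 3/4$ per cross-constraint'' bound in the degenerate regime; this is a short enumeration of the possible status combinations (free / self-loop / coincident with a matching pair) for the two pairs of each cross-constraint, controlled by the opposite-orientation check above.
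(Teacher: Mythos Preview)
Your argument is correct, and your remark that the running time must be $\Theta(n^2)$ (since the output itself has $\binom{n}{2}$ edges) is well taken; the stated $O(n)$ is indeed a slip.

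That said, your route differs from the paper's. You attack the degenerate regime $S_A\cap S_B\ne\emptyset$ directly: you build a conflict graph on the $\binom{k}{2}$ cross-constraints, show it has maximum degree at most $2$, extract an independent set $I$ of size at least $\binom{k}{2}/3$, and then verify via the case analysis you sketch (free / self-loop / coincident with a matching pair, using the ``opposite orientation'' observation) that the constraints in $I$ are conditionally independent with probability at most $3/4$ each. This works, but the case analysis is where all the effort goes.

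The paper sidesteps this entirely with a two-step reduction. It calls a matching \emph{non-repetitive} if for every $i\in[n]$ at most one of $u_i,v_i$ is covered, and observes (by a three-line greedy argument: pick an edge $(u_i,v_j)$, discard it together with any edges touching $v_i$ or $u_j$, repeat) that every matching contains a non-repetitive submatching of at least one third its size. It then applies the first-moment bound only to non-repetitive connected matchings of size $D=20\log n$; for those all $2D$ indices are distinct by definition, so the $\binom{D}{2}$ connectedness events are genuinely independent with probability exactly $3/4$ each, and the union bound $n^{2D}(3/4)^{\binom{D}{2}}=o(1)$ is immediate. This decomposition is shorter, avoids the conflict-graph and case analysis altogether, and the non-repetitive reduction is reused verbatim later in the paper (in the proof of \cref{claim:reduction-cm}). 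Your approach, while more laborious, is self-contained and yields the same $O(\log n)$ bound.
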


\begin{proof}
We construct $\HC_n$ by choosing for each $\{i, j\} \subseteq [n]$ to add to $E_H$ either $(u_i, v_j)$ or $(u_j, v_i)$ independently with probability 1/2.
Clearly, $\HC_n$ is a bipartite half-cover of $K_n$.
Below we show that $\nu_c(H) \leq O(\log n)$ with probability $1 - o(1)$.
We prove this in two steps: first, we will prove the $O(\log n)$ upper bound on a special class of connected matching and, then, we will show that any connected matching contains a large (constant fraction) matching of this type.

Let $M \subseteq E_H$ be any matching in $H$. We say that the matching is \emph{non-repetitive} if, for each $i \in [n]$, at most one of $u_i$ or $v_i$ appears in $M$. We will now argue that with probability $1 - o(1)$, any connected non-repetitive matching has size less than $D := 20\log n$. To do so, consider any ordered tuple $(i_1, j_1, \dots, i_D, j_D)$ where $i_1, \dots, i_D, j_1, \dots, j_D$ are all distinct. The probability that $(u_{i_1}, v_{j_1}), \dots, (u_{i_D}, v_{j_D})$ is a connected matching is at most
\begin{align*}
\Pr[\forall 1 \leq k < \ell \leq D, (u_{i_k}, v_{j_\ell}) \in E_H \vee (u_{i_\ell}, v_{j_k}) \in E_H] &= \prod_{1 \leq k < \ell \leq D} \Pr[(u_{i_k}, v_{j_\ell}) \in E_H \vee (u_{i_\ell}, v_{j_k}) \in E_H] \\
&= \prod_{1 \leq k < \ell \leq D} (3/4) = (3/4)^{D(D - 1)/2}
\end{align*}
where the first two equalities use the fact that $i_1, \dots, i_1, j_1, \dots, j_D$ are distinct, meaning that the events considered are all independent. Hence, by union bound over all such sequences, we can conclude that the probability that $H$ contains a connected non-repetitive matching of size $D$ is at most $n^{2D} \cdot (3/4)^{D(D - 1)/2} = (n^2 \cdot (3/4)^{(D - 1)/2})^D = o(1)$.

Finally, observe that any matching $M \seq E_H$ contains a non-repetitive matching $M' \seq M$ of size at least $|M|/3$.
Indeed, given a matching $M$ we can construct $M'$ iteratively by picking an arbitrary edge $e = (u_i, v_j) \in M$,
remove $e$ and all edges touching $v_i$ or $u_j$ from $M$ and add $e$ to $M'$.
We repeat this procedure until $M = \emptyset$.
Since we add one edge to $M'$ while removing at most three edges from $M$, we arrive at a non-repetitive $M' \subseteq M$ of size at least $|M|/3$.
As a result, the graph $\HC_n$ does not contain any connected matching of size at least $3D = O(\log n)$ with probability $1 - o(1)$.
\end{proof}

\begin{remarks}
\begin{enumerate}
\item
We remark that a deterministic polynomial time construction of such graphs would imply that the hardness
result in \cref{thm:connected-match-inapprox} holds under a deterministic reduction (as oppose to the randomized reduction, currently stated).
\item
We comment that there is a connection between Ramsey graphs and half-cover of $K_n$ with small $\nu_c(\HC_n)$. Specifically, if we can deterministically construct half-cover for $K_n$ with $\nu_c(\HC_n) \leq f(n)$, then we can deterministically construct $n$-vertex $(f(n)+1)$-Ramsey graphs. This is because, we can think of half-cover $\HC_n$ as a bichromatic $K_n$ where $(i, j)$ for $i < j$ is colored red if $(u_i, v_j) \in E_H$ and it is colored blue otherwise (i.e. $(u_j, v_i) \in E_H$). It is easy to check that any monochromatic clique of size $r$ in $K_n$ implies a connected matching of size $r-1$ in $\HC_n$.
While there are explicit constructions of Ramsey graphs, it is unclear (to us) how to construct such half-cover from these constructions.
\item
Using a different approach we can show that it is $\NP$-hard to compute $\nu_c(G)$ under a \emph{deterministic} reduction.
See \cref{sec:hardness-deterministic-reduction-cm} for details.
\end{enumerate}
%
%We do not know how to construct above graph $\HC_n$ deterministically in poly-time.
%We comment that there is a connection between Ramsey graphs and half-cover of $K_n$ with small $\nu_c(\HC_n)$. Specifically, give a half-cover $\HC_n$ for $K_n$ with $\nu_c(\HC_n) \leq f(n)$, implies a constructing of an $n$-vertex $f(n)$-Ramsey graph $R = ([n], E_R)$,
%where for all $i < j$ the edge $(i, j) \in E_R$ if and only if $(u_i, v_j) \in E_H$.
%It is easy to check that any monochromatic clique in this graph implies a connected matching of the same size.
%\igor{I don't see it}
%While there are explicit constructions of Ramsey graphs, it is unclear (to us) how to construct such half-cover from these constructions.
\end{remarks}
%\item It is possible (and simple) to construct (in deterministic $\poly(n)$-time) a bipartite half-cover of $K_n$ with $\nu_c(H) < n/\tilde{\Omega}(\sqrt{\log n})$. This implies NP-hardness of factor $\tilde{\Omega}(\sqrt{\log n})$ for the problem and NP-hardness of factor $d^{1 - o(1)}$ for the $d$-regular version.
%To construct such a half-cover, let us divide $[n]$ into blocks each of size $B = \sqrt{\log n}$, i.e., we associate $[n]$ with $[n/B] \times [B]$. Now, we enumerate all possible graphs to find a bipartite half-cover $H_B$ of $K_B$ with $\nu_C(H_B) \leq O(\log B)$.
%\end{enumerate}

%%%%%%%%%%%%%%%%%%%%%%%%%%%%%%%%%%%%%%%%%%%%%%%%%%%%%%%%%%%%%%%
\subsubsection{Proof of \cref{thm:connected-match-inapprox}}
%%%%%%%%%%%%%%%%%%%%%%%%%%%%%%%%%%%%%%%%%%%%%%%%%%%%%%%%%%%%%%%

With the gadget from \cref{claim:random-half-cover-small-nu-c} we are ready to prove \cref{thm:connected-match-inapprox}.
This is done in the following claim.

\begin{claim} \label{claim:reduction-cm}
Let $G = (V_G = [n], E_G)$ be an $n$-vertex graph, and let $H = (A = \{u_1, \dots, u_n\}, B = \{v_1, \dots, v_n\}, E_H)$ be a balanced bipartite graph.
Let $G \boxminus H = (A, B, E_{G \boxminus H})$ be the balanced bipartite graph with $n$ vertices on each side,
where (1) for every $\{i, j\} \subseteq [n]$, $(u_i, v_j) \in E_{G \boxminus H}$ if and only if $(u_i, v_j) \in E_H$ and $(i, j) \in E_G$, and
(2) for every $i \in [n]$, $(u_i, v_i) \in E_{G \boxminus H}$.

Then, for any such $G$ we have $\nu_c(G \boxminus H) \leq \omega(G) + 3\nu_c(H)$ where $\omega(G)$ denotes the clique number of $G$.
Furthermore, if $H$ is a bipartite half-cover of $K_n$, then $\omega(G) \leq \nu_c(G \boxminus H)$.
\end{claim}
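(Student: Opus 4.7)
My plan is to split any connected matching in $G \boxminus H$ into its diagonal and off-diagonal parts, bound each separately, and then prove the lower bound by lifting a clique of $G$ directly to a connected matching of diagonal edges.

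For the upper bound, let $M$ be any connected matching in $G \boxminus H$ and partition it as $M_1 := M \cap \{(u_i, v_i) : i \in [n]\}$ and $M_2 := M \setminus M_1$. Writing $M_1 = \{(u_i, v_i) : i \in I\}$, I would argue that $I$ is a clique of $G$: for distinct $i, j \in I$, the connecting edge in $G \boxminus H$ between $(u_i, v_i)$ and $(u_j, v_j)$ must be $(u_i, v_j)$ or $(u_j, v_i)$, and since both are off-diagonal, clause~(1) of the definition of $G \boxminus H$ forces $(i, j) \in E_G$. Hence $|M_1| = |I| \leq \omega(G)$.

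The key step is bounding $|M_2|$. Each off-diagonal edge of $G \boxminus H$ lies in $E_H$ by construction, so $M_2$ is a matching in $H$; the difficulty is that $M_2$ need not be \emph{connected} in $H$, because two of its edges might be linked in $G \boxminus H$ only through a diagonal bridge that is not in $E_H$. To circumvent this, I would pass to a non-repetitive sub-matching $M_2' \subseteq M_2$ of size at least $|M_2|/3$, exactly as in the greedy argument at the end of the proof of \cref{claim:random-half-cover-small-nu-c}. For any two edges $(u_{i_1}, v_{j_1}), (u_{i_2}, v_{j_2}) \in M_2'$, the four indices $i_1, j_1, i_2, j_2$ are pairwise distinct: off-diagonality gives $i_k \neq j_k$, the matching property gives $i_1 \neq i_2$ and $j_1 \neq j_2$, and non-repetitiveness rules out $i_1 = j_2$ and $i_2 = j_1$. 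Consequently, whichever of $(u_{i_1}, v_{j_2})$ or $(u_{i_2}, v_{j_1})$ realizes the connection in $G \boxminus H$ is necessarily off-diagonal and thus lies in $E_H$. So $M_2'$ is a connected matching in $H$, yielding $|M_2| \leq 3|M_2'| \leq 3\nu_c(H)$, and combining gives $|M| \leq \omega(G) + 3\nu_c(H)$.

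For the ``furthermore'' direction, assuming $H$ is a bipartite half-cover of $K_n$, I would take a maximum clique $C$ of $G$ and exhibit $\{(u_i, v_i) : i \in C\}$ as a connected matching in $G \boxminus H$ of size $\omega(G)$. The diagonal edges are present by clause~(2); for distinct $i, j \in C$, the half-cover property gives $(u_i, v_j) \in E_H$ or $(u_j, v_i) \in E_H$, and since $(i, j) \in E_G$ by the clique property, that off-diagonal edge survives into $G \boxminus H$ and connects the two diagonal edges. I expect the most delicate step to be verifying that the non-repetitive refinement really does upgrade ``connected in $G \boxminus H$'' to ``connected in $H$'' — that short index-disjointness argument is the linchpin, and it is precisely what produces the factor $3$ in the final bound.
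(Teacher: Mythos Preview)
Your proposal is correct and follows essentially the same approach as the paper's proof: the same diagonal/off-diagonal split, the same bound $|M_1| \leq \omega(G)$ via the clique argument, the same passage to a non-repetitive sub-matching of $M_2$ losing at most a factor of $3$, and the same lifting of a maximum clique to a diagonal connected matching for the lower bound. Your index-disjointness argument for why $M_2'$ is connected in $H$ is in fact a slightly more explicit version of what the paper states tersely.
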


\cref{claim:reduction-cm} immediately implies \cref{thm:connected-match-inapprox}.
Indeed, by \cite{haastad2001some,zuckerman2006linear} given an $n$-vertex graph $G$ it is NP-hard to decide between the case where $\omega(G) \geq n^{1-\eps/2}$, and the case where $\omega(G) \leq  n^{\eps/2}$.
Therefore, we can define a randomized reduction that given an $n$-vertex graph $G$ constructs (with high probability) $\HC_n$, the bipartite half-cover of $K_n$, with $\nu_c(\HC_n) \leq O(\log n)$,
and outputs $G \boxminus H$, which can be clearly constructed in time that is linear in the size of $G$.
In the YES case, if $\omega(G) \geq n^{1-\eps/2}$, then by the ``furthermore'' part of \cref{claim:reduction-cm} we have $\nu_c(G \boxminus \HC_n) \geq \omega(G) \geq n^{1-\eps/2}$,
and in the NO case, if $\omega(G) \leq n^{\eps/2}$, then by \cref{claim:reduction-cm} we have $\nu_c(G \boxminus \HC_n) \leq \omega(G) + \nu_c(\HC_n) \leq n^{\eps/2} + O(\log n)$.
This completes the proof of \cref{thm:connected-match-inapprox}.

We now turn to the proof of \cref{claim:reduction-cm}.

\begin{proof}[Proof of \cref{claim:reduction-cm}]
First, we will show that $\nu_c(G \boxminus H) \leq \omega(G) + 3\nu_c(H)$. Let $M \subseteq E_{G \boxminus H}$ be any connected matching in $G \boxminus H$. We partition $M$ into two disjoint sets $M_{\parallel}$ and $M_{\times}$ where $M_{\parallel} = M \cap \{(u_i, v_i) \mid i \in [n]\}$ and $M_{\times} = M \setminus M_{\parallel}$. We will show that $|M_{\parallel}| \leq \omega(G)$ and $|M_{\times}| \leq 3\nu_c(H)$.

To show that $|M_{\parallel}| \leq \omega(G)$, suppose that $M_{\parallel} = \{(u_{i_1}, v_{i_1}), \dots, (u_{i_t}, v_{i_t})\}$. By the definition if $(u_i, v_i)$ is connected to $(u_{i'}, v_{i'})$ in $G \boxminus H$, then $(i, i') \in E_G$.
Therefore, $\{i_1, \dots, i_t\}$ induces a clique in $G$ and $\omega(G) \geq t = |M_{\parallel}|$ follows.

Next, we show that $|M_{\times}| \leq 3\nu_c(H)$. Let us first define non-repetitive matching in the same way as that in the proof of \cref{claim:random-half-cover-small-nu-c}.
Using the same argument as in that proof, we can conclude that $M_{\times}$ contains a non-repetitive connected matching $M'_{\times} \subseteq M_{\times}$ of size at least $|M_{\times}|/3$. We claim that $M'_{\times}$ is also a connected matching in $H$.
Indeed, since every edge in $M'_{\times}$ belongs to $E_H$, the non-repetitiveness implies that any pair of edges in $M'_{\times}$ is  connected by an edge that also belongs to $E_H$.
As a result, we can conclude that $|M_{\times}| \leq 3|M'_{\times}| \leq 3\nu_c(H)$.

Combining the above two bounds yields $\nu_c(G \boxminus H) \leq \omega(G) + 3\nu_c(H)$ as desired.

\medskip

Finally, assume that $H$ is a bipartite half-cover of $K_n$. For any clique $C \subseteq V_G$ in $G$, it is not hard to see that the matching $M_C = \{(u_i, v_i): i \in C \}$ is a connected matching in $G \boxminus H$.
Indeed, for each distinct $i, j \in C$ we have either $(u_i, v_j) \in E_H$ or $(u_j, v_i) \in E_H$ (from definition of bipartite half-cover of $K_n$),
and hence either $(u_i, v_j)$ or $(u_j, v_i)$ belongs to $E_{G \boxminus H}$.
Therefore, $\nu_c(G \boxminus H) \geq \omega(G)$, which completes our proof.
\end{proof}

%%%%%%%%%%%%%%%%%%%%%%%%%%%%%%%%%%%%%%%%%%%%%%%%%%%%%%%%%%%%%%%
\subsection{Hardness of finding a connected perfect matching}
%%%%%%%%%%%%%%%%%%%%%%%%%%%%%%%%%%%%%%%%%%%%%%%%%%%%%%%%%%%%%%%

In this section we show that given a bipartite graph $G$ with $n$ vertices on each side,
it is $\NP$-hard to find a connected matching of size $n$.
%This result is somewhat surprising as even for notoriously difficult problems such as clique it is trivial to decide whether an $n$-vertex graph contains a clique of maximal possible size %$n$.
%In contrast, it is $\NP$-hard to decide whether a balanced bipartite graph $G$ contains a connected matching of size $n$.

\begin{theorem}\label{thm:hardness-perfect-connected-matching}
Given a bipartite graph $G=(A,B,E)$ with $|A|=|B|=n$ it is $\NP$-hard to determine whether $\nu_C(G)=n$.
\end{theorem}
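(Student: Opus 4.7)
The plan is to reduce from the $\NP$-hardness of the decision problem ``given a bipartite graph $G_0 = (A_0, B_0, E_0)$ with $|A_0|=|B_0|=n_0$ and an integer $k$, is $\nu_c(G_0) \geq k$?''---which follows either from \cref{thm:connected-match-inapprox} or from the deterministic reduction referenced in the third remark above---to the perfect connected matching problem, via a padding construction with universal vertices. Given such an instance $(G_0, k)$, I construct $G_1 = (A_1, B_1, E_1)$ with $|A_1|=|B_1|=2n_0-k$ as follows: introduce fresh vertex sets $X_A, X_B$ of size $n_0-k$ each, set $A_1 = A_0 \cup X_A$ and $B_1 = B_0 \cup X_B$, and let $E_1$ consist of $E_0$ together with all pairs $(x,b)$ with $x \in X_A, b \in B_1$ and all pairs $(a,x)$ with $a \in A_1, x \in X_B$. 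Informally, every vertex of $X_A \cup X_B$ is made universally connected to the opposite side of $G_1$.

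The key claim to prove is that $\nu_c(G_1) = 2n_0-k$ if and only if $\nu_c(G_0) \geq k$. In the forward direction, a connected matching $M_0 \seq E_0$ of size $k$ extends to a perfect matching of $G_1$ by pairing each uncovered vertex of $A_0$ with a distinct vertex of $X_B$ and each uncovered vertex of $B_0$ with a distinct vertex of $X_A$; pairwise connectivity is immediate, as edges inside $M_0$ are connected by hypothesis and any pair involving an endpoint in $X_A \cup X_B$ is joined by one of the added universal edges. For the backward direction, given a connected perfect matching $M_1$ of $G_1$, I classify its edges by which of the four pairs $(A_0, B_0), (A_0, X_B), (X_A, B_0), (X_A, X_B)$ contain its endpoints, with counts $k_1, k_2, k_3, k_4$ respectively. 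The vertex-coverage equalities $k_1+k_2 = n_0 = k_1+k_3$ and $k_3+k_4 = n_0-k = k_2+k_4$ force $k_1 = k + k_4 \geq k$. Moreover, any two edges of $M_1$ with both endpoints in $A_0 \cup B_0$ must be joined in $G_1$ by an edge whose endpoints also lie in $A_0 \cup B_0$, and the only such edges of $E_1$ are those of $E_0$; hence these $k_1 \geq k$ edges form a connected matching of $G_0$.

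The only nontrivial step is the short counting argument yielding $k_1 \geq k$; everything else is routine. The reduction is clearly polynomial and preserves bipartite balance, so this suffices to establish $\NP$-hardness of deciding whether $\nu_c(G) = n$.
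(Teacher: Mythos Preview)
Your proposal is correct and essentially the same as the paper's proof: both reduce from the $\NP$-hard decision problem for $\nu_c$ by padding with $n_0-k$ universal vertices on each side, use the same forward extension of a size-$k$ connected matching to a perfect one, and in the backward direction observe that the edges of a perfect connected matching lying entirely in $A_0\cup B_0$ form a connected matching in $G_0$ of size at least $k$. Your counting via $k_1,k_2,k_3,k_4$ is simply a more explicit version of the paper's pigeonhole on the $2(n_0-k)$ padding vertices.
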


\begin{proof}
By \cref{thm:connected-match-inapprox} given a graph $G = (A,B,E_G)$ with $N$ vertices of each side it is $\NP$-hard to decide whether $G$ contains a connected matching of size $k = N^{1-\eps}$.
Consider the reduction that given a graph $G = (A,B,E_G)$ outputs $H=(A \cup A', B \cup B',E_H)$ as follows.
The sets $A'$ and $B'$ are two disjoint sets that are also disjoint from $A,B$ with $|A'|=|B'|=N-k$.
The set of edges $E_H$ is defined as $E_H = E_G \cup \{(i,j) : i \in A', j \in B \cup B'\} \cup \{(i,j) : i \in A \cup A', j \in B\} $.
That is, the graph $H$ contains the graph $G$ as the induced graph on the vertices $A \cup B$,
and in addition, every vertex in $A'$ is connected to all vertices in $B \cup B'$,
and every vertex in $B'$ is connected to all vertices in $A \cup A'$,

The graph $H$ is a balanced bi-partite graph with $n = 2N-k$ vertices on each side.
We claim that $\nu_C(G)=k$ if and only if $\nu_C(H)=n$.

In one direction, suppose that $G$ has a connected matching $M_G = \{e_1,...,e_k\}$ of size $k$. We construct a matching $M'$ of size $2N-k$ as follows.
For each vertex $v \in A \cup B$ not covered by $M_G$, we pick a distinct element $w_v \in A' \cup B'$ that is a neighbor of $v$.
Define a matching in $H$ to be $M' = M \cup N$, where $N = \{(v,w_v) : v \in V(G) \setminus V(M_G)\}$.
By the construction of $H$, each edge in $N$ is connected to every other edge in $M'$ using an edge between $A'$ and $B'$.
Every pair of edges in $M_G$ are connected since $M_G$ is a connected matching in $G$.
Thus, $M'$ is a connected matching of size $n$ in $H$.

Conversely, suppose $H$ has a connected matching $M_H$ of size $n$. Then, there must be is a submatching $M \subseteq M_H$ of size $|M| = k$ such that no edge in $M$ contains a vertex in $A'\cup B'$.
Thus, $M$ is a matching in $G$, and since $M_H$ is a connected matching so is $M$. It follows that $G$ has a connected matching of size $k$, as required.
\end{proof}

%%%%%%%%%%%%%%%%%%%%%%%%%%%%%%%%%%%%%%%%%%%%%%%%%%%%%%%%%%%%%%%%%%%%%%
\section{Hardness results for computing $\a_k(G)$}\label{sec:hardness}
%%%%%%%%%%%%%%%%%%%%%%%%%%%%%%%%%%%%%%%%%%%%%%%%%%%%%%%%%%%%%%%%%%%%%%

In this section we study the computational complexity both of the decision problem $\MT$
as well as the problem of computing $\a_k(G)$ exactly or approximately.
We first show an almost optimal inapproximability result for $\alpha_n(G)$, which is stated and proved below.

\begin{theorem}\label{thm:alpha-n-hardness}
    For any $\eps>0$, given a bipartite graph $G$ with $n$ vertices in each part,
    it is $\NP$-hard to approximate $\a_n(G)$ within a factor $n^{1-\eps}$.

    Furthermore, given a bipartite graph $G$ with $n$ vertices in each part,
    where the degree of each vertex is at most $d$
    it is $\NP$-hard to approximate $\a_n(G)$ within a factor $O(\frac{d}{\log^4(d)})$ %\igor{Siu On Chan},
    and it is $\UG$-hard to approximate $\a_n(G)$ within a factor $O(\frac{d}{\log^2(d)})$.
\end{theorem}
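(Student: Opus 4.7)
The plan is to reduce from Maximum Independent Set. By the classical results of H{\aa}stad and Zuckerman~\cite{haastad2001some,zuckerman2006linear}, $\alpha(G)$ is $\NP$-hard to approximate within $N^{1-\epsilon}$ on $N$-vertex graphs, and in the bounded-degree setting MIS is $\NP$-hard to approximate within $O(d/\log^4 d)$ and $\UG$-hard to approximate within $O(d/\log^2 d)$ when $\Delta(G)\le d$. Given $G$ on vertex set $[N]$, I would construct the bipartite graph $H = (A, B, E_H)$ with $A = \{u_1, \dots, u_N\}$, $B = \{v_1, \dots, v_N\}$, and
\[
    E_H \;=\; \{(u_i, v_i) : i \in [N]\} \;\cup\; \{(u_i, v_j) : (i, j) \in E_G,\ i < j\}.
\]
That is, $H$ consists of the diagonal perfect matching together with exactly one ``forward'' copy $(u_i, v_j)$ of each edge of $G$. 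Then $|A|=|B|=N$, the diagonal $M^* = \{(u_i, v_i) : i \in [N]\}$ is a perfect matching of $H$, and $\Delta(H) \leq \Delta(G)+1$.

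The first step is to argue that $M^*$ is the \emph{unique} perfect matching of $H$. Any other perfect matching would yield an $M^*$-alternating cycle; such a cycle has the form $u_{a_1}\!-\!v_{a_2}\!-\!u_{a_2}\!-\!v_{a_3}\!-\!\cdots\!-\!u_{a_k}\!-\!v_{a_1}\!-\!u_{a_1}$, whose non-matching edges $(u_{a_s}, v_{a_{s+1}})$ (indices mod $k$) can lie in $E_H$ only when $a_s < a_{s+1}$. This would force $a_1<a_2<\cdots<a_k<a_1$, a contradiction. The second step is to compute the conflict graph $G_{M^*}$: for $i<j$, the only non-matching edge possibly joining $(u_i, v_i)$ and $(u_j, v_j)$ is $(u_i, v_j)$, which lies in $E_H$ iff $(i,j)\in E_G$. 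Hence $G_{M^*}=G$, and the largest induced sub-matching of $M^*$ has exactly $\alpha(G)$ edges.

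Combining the two steps, $\a_n(H)\cdot n = \alpha(G)$ with $n=N$, so any factor-$f$ approximation of $\a_n(H)$ yields a factor-$f$ approximation of $\alpha(G)$. Plugging in the three MIS hardness statements above immediately gives, in order, the $n^{1-\epsilon}$ $\NP$-hardness, the $O(d/\log^4 d)$ $\NP$-hardness (using $\Delta(H)\le\Delta(G)+1$), and the $O(d/\log^2 d)$ $\UG$-hardness for $\a_n$. I expect the only nontrivial part of the argument to be uniqueness of $M^*$, which is handled cleanly by the strict total order on $[N]$ used to orient each edge of $G$---this forces indices along any $M^*$-alternating cycle to strictly increase, preventing the cycle from closing.
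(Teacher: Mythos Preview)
Your proposal is correct and follows essentially the same approach as the paper: the identical reduction from MIS via the ``oriented'' bipartite graph with the diagonal perfect matching, the observation that this matching is unique, and the identification of its conflict graph with $G$. The only cosmetic difference is that the paper proves uniqueness by a minimal-index argument rather than your alternating-cycle argument; both are short and valid, and the bounded-degree corollary is handled the same way (the paper cites \cite{chan2016approximation} and \cite{austrin2009inapproximability} for the $O(d/\log^4 d)$ and $O(d/\log^2 d)$ MIS hardness, respectively).
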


\begin{proof}
  The proof is by a reduction from the Maxium Independent Set problem.
  Given an $n$ vertex graph $H = (U_H,E_H)$ instance of the $\MIS$ we construct
  a bipartite graph $G$ as follows. Denote the vertices of $H$ by $U_H = \{u_1,u_2,\dots,u_n\}$.
  Then the vertices of the bipartite graph $G = (V_G = A \cup B, E_G)$
  are defined by $A = \{v_i : i \in [n]\}$ and $B = \{v'_i : i \in [n]\}$,
  and the edges of $G$ are $E_G = \{(v_i,v'_i) : i \in [n]\} \cup \{(v_i,v'_j) :  i < j \wedge (u_i,u_j) \in E_H\}$.
  Note that the only perfect matching in $G$, i.e., a matching of size $n$,
  is the matching $N = \{(v_i,v'_i) : i \in [n]\}$. Indeed, suppose there exists another matching $M$ with $|M| = n$. Then $M$ has at least one edge of the form $e=(v_i,v'_j)$ with $i<j$ and suppose that $e$ is such that $i$ is minimal (where the minimum is taken with respect to all edges not in $N$). If any edge in $M$ covers $v'_i$, then it cannot belong to $N$ as $M$ is a a matching. By the definition of $E_G$ there cannot be an edge in $M$ that covers $v'_i$ by the minimality of $i$. As all vertices of $H$ must be matched in order for $|M| = n$, we get a contradiction showing that $N$ is indeed the unique matching of size $n$.

  We claim that $H$ contains an independent set of size at least $\a$ if and only if $\a_n(G) \geq \frac{\a}{n}$.
  Indeed, a set $I \seq V_H$ is an independent set in $H$ if and only if
  $M' = \{(v_i,v'_i) : i \in I\}$ is an induced matching contained in $M$. Hence if $H$ contains an independent set of size
  $\alpha$ then $M$ contains an induced matching of size $\alpha$. Conversely, If $M$ contains an induced matching of size $\alpha$ then
  $H$ has an independent set of size $\alpha$.
  It is well known that for any $\delta<1/2$ it is \NP-hard to distinguish between $n$-vertex graphs that contain an independent set of size
  at least $n^{1-\delta}$ (YES case) and graph that do not contain an independent set of size at least $n^{\delta}$ (NO-case) \cite{haastad2001some,zuckerman2006linear}.
  By the reduction described above it is \NP-hard to distinguish between a bipartite graph $G'$ with sides of cardinality $n$ satisfying
  $\a_n(G') \geq n^{1-\delta}/n=n^{-\delta}$ to a graph $G''$ satisfying $\a_n(G') \leq n^{\delta}/n=n^{\delta-1}$ as this would enable to distinguish between
  the YES and NO cases described above. The result now follows by taking $\delta$ to equal $\epsilon/2$.

  The result for graphs of maximum degree $d$ follows by noting that if the
  maximal degree of $H$ is at most $d$, then the maximal degree of $G$ is upper bounded by $d+1$.
  Therefore, since it is \NP-hard to approximate $\MIS$
  in graphs of maximum degree $d$ within a factor of $O(\frac{d}{\log^4(d)})$ \cite{chan2016approximation} and \UG-hard to approximate $\MIS$ in graphs of maximum degree $d$ within  a factor of $O(\frac{d}{\log^2(d)})$ \cite{austrin2009inapproximability},
  the analogous hardness computing $\a_n$ also follows.
\end{proof}

We remark that by adding isolated vertices to the graph, the above hardness result also implies hardness of approximating $\alpha_{k}(G)$ to within factor of $k^{1 - \varepsilon}$ for every $\varepsilon > 0$ and every $k \geq n^\delta$ for any constant $\delta \in (0, 1)$.

Recall the decision problem $\MT$ from Definition~\ref{def:mt}. As mentioned in the introduction, $\MT$ clearly belongs to the class $\Pi_2$. We show the following:

\begin{theorem}\label{thm:MT-is-NP-hard-coNP-hard}
  The decision problem $\MT$ is $\NP$-hard and $\coNP$-hard.
\end{theorem}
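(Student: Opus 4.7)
The plan is to exploit a single combinatorial characterization highlighted informally in the ``Our techniques'' section: for a graph $G$ containing a matching of size $k$, the quantity $\alpha_k(G)$ lies in $\{1/k, 2/k, \dots, 1\}$, and the lower value $\alpha_k(G) = 1/k$ is attained if and only if $G$ contains a connected matching of size $k$. Together with the convention $\alpha_k(G) = 1$ when $G$ has no matching of size $k$ at all, this yields the pivotal equivalence
\[
\alpha_k(G) \geq 2/k \quad\Longleftrightarrow\quad G \text{ has no connected matching of size } k
\qquad (k \geq 2).
\]
This single equivalence will drive both halves of the theorem.

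For the $\NP$-hardness half, I would invoke \cref{thm:alpha-n-hardness} essentially as a black box. Its proof establishes slightly more than the stated inapproximability: the reduction from $\MIS$ produces an $n$-vertex bipartite graph $G$ on which it is $\NP$-hard to distinguish $\alpha_n(G) \geq n^{-\delta}$ from $\alpha_n(G) \leq n^{\delta-1}$ for any $\delta < 1/2$. Setting $k = n$ and choosing any rational threshold $\eta$ strictly between these two values (say $\eta = n^{-\delta}$), the $\MT$-instance $(G, n, \eta)$ separates the YES and NO cases of the underlying $\MIS$ instance, immediately giving $\NP$-hardness of $\MT$.

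For the $\coNP$-hardness half, I would reduce from the complement of the connected perfect matching problem, which is $\NP$-hard by \cref{thm:hardness-perfect-connected-matching}. Given a balanced bipartite $G$ with $n$ vertices per side, output the $\MT$-instance $(G, n, 2/n)$. By the pivotal equivalence, $\MT$ returns YES on this instance iff $\alpha_n(G) \geq 2/n$ iff $G$ contains no connected matching of size $n$ iff $\nu_c(G) < n$. Thus the map $G \mapsto (G, n, 2/n)$ is a polynomial-time reduction from the complement of an $\NP$-hard problem to $\MT$, which is exactly what is needed to establish $\coNP$-hardness.

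The only step requiring actual verification — rather than mere bookkeeping — is the characterization itself. I would argue it in two lines: (i) $\alpha_k(G) \geq 1/k$ is trivial since any single edge of a matching $M$ of size $k$ is an induced matching of relative size $1/k$; (ii) $\alpha_k(G) = 1/k$ forces the existence of a size-$k$ matching $M$ whose maximum induced sub-matching has size exactly $1$, which by definition means every pair of edges of $M$ is joined by an edge of $G$, i.e., $M$ is connected. The converse is equally direct. Once this is in place, the two hardness reductions combine the structural equivalence with off-the-shelf applications of \cref{thm:alpha-n-hardness,thm:hardness-perfect-connected-matching} and nothing else is required.
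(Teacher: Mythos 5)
Your proposal is correct and matches the paper's argument step for step: $\NP$-hardness via the gap instances produced by \cref{thm:alpha-n-hardness}, and $\coNP$-hardness by reducing the complement of the connected-perfect-matching problem (\cref{thm:hardness-perfect-connected-matching}) to $\MT$ using the observation that $\alpha_n(G) \leq 1/n$ iff $G$ has a connected matching of size $n$. Your write-up is slightly more explicit than the paper's about the value set $\{1/k,\dots,1\}$ and the direction of the reduction, but the underlying proof is the same.
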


%The $\NP$-hardness of $\MT$ follows from \cref{thm:alpha-n-hardness} and the $\coNP$-hardness of $\MT$ follows from \cref{thm:connected-match-inapprox}.

\begin{proof}[Proof of \cref{thm:MT-is-NP-hard-coNP-hard}]
    By \cref{thm:alpha-n-hardness} if follows that that
    there is a reduction from any problem in $\NP$ that produces a graph $G$ and a parameter $k = n$
    such that in the YES case $\a_k(G) \geq 1/n^{\eps}$, and in the NO case $\a_k(G) \leq 1/n^{1-\eps}$.
    In particular, this implies that $\MT$ is $\NP$-hard.

    In order to prove that $\MT$ is $\coNP$-hard we use \cref{thm:hardness-perfect-connected-matching}.
    Indeed, observe that $\a_{n}(G) \leq 1/n$ if and only if $G$ contains a connected matching of size $n$,
    and hence there is a reduction from any problem in $\NP$ that produces a graph $G$ and $k = n$
    such that in the YES case $\a_k(G) \leq 1/k$, and in the NO case $\a_k(G) \geq 2/k$.
    This completes the proof of \cref{thm:MT-is-NP-hard-coNP-hard}
\end{proof}

Using Theorem~\ref{thm:hardness-perfect-connected-matching}, we demonstrate that it is unlikely that $\MT$ belongs to $\NP \cup \coNP$.
\begin{corollary}
If the decision problem $\MT$ belongs to $\NP \cup \coNP$, then the polynomial-time hierarchy collapses to the first level.
\end{corollary}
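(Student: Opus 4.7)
The plan is to combine Theorem \ref{thm:MT-is-NP-hard-coNP-hard}, which established that $\MT$ is simultaneously $\NP$-hard and $\coNP$-hard, with the classical fact that $\NP = \coNP$ forces the polynomial hierarchy to collapse to its first level. Since the statement is a simple two-case argument, I do not expect any serious obstacle; the only thing to be careful about is to note that the hardness reductions underlying Theorem \ref{thm:MT-is-NP-hard-coNP-hard} are many-one polynomial-time reductions, so that containment in $\NP$ (respectively $\coNP$) transfers cleanly.

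Concretely, I would argue as follows. Suppose first that $\MT \in \NP$. Because $\MT$ is $\coNP$-hard under polynomial-time many-one reductions (Theorem \ref{thm:MT-is-NP-hard-coNP-hard}), every language $L \in \coNP$ reduces to $\MT$, and since $\NP$ is closed under such reductions we obtain $L \in \NP$. Hence $\coNP \subseteq \NP$, and by symmetry $\NP = \coNP$. The symmetric case $\MT \in \coNP$ is handled identically using the $\NP$-hardness of $\MT$: every $L \in \NP$ reduces to $\MT \in \coNP$, yielding $\NP \subseteq \coNP$, and again $\NP = \coNP$.

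Finally, I would invoke the standard result (see, e.g., the textbook reference to \cite{papadimitriou2003computational} already cited in the preliminaries) that the equality $\NP = \coNP$ implies the collapse of the entire polynomial-time hierarchy to its first level, i.e., $\PI = \NP = \coNP$. Combining the two cases above, membership of $\MT$ in $\NP \cup \coNP$ therefore forces this collapse, which is the desired conclusion. Since the whole argument is a two-line application of hardness together with a well-known structural fact, the ``proof'' is essentially a remark, and no nontrivial step is expected.
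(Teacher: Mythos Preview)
Your proposal is correct and follows essentially the same line as the paper: invoke Theorem~\ref{thm:MT-is-NP-hard-coNP-hard} to conclude that $\MT \in \NP$ (resp.\ $\coNP$) would force $\coNP \subseteq \NP$ (resp.\ $\NP \subseteq \coNP$), and then cite the standard fact that $\NP = \coNP$ collapses the hierarchy to its first level. The paper's own justification is even terser---a single sentence citing \cite{papadimitriou2003computational}---but the content is identical.
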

Indeed, this follows from the fact that if $\NP \seq \coNP$, then $\NP =\coNP$ (see e.g., Proposition 10.2 in~\cite{papadimitriou2003computational}), and hence the polynomial hierarchy collapses to the first level.

We end this section with several remarks.
    \begin{enumerate}
    \item
    Note that the proof of \cref{thm:alpha-n-hardness} shows that the problem of computing $\a_n(G)$
    is $\NP$-hard on graphs with $n$ vertices on each side that
    even if $G$ contains a unique perfect matching.
    \item
    Note also that the hardness result in \cref{thm:alpha-n-hardness} for bounded degree graphs is unlikely to hold for $d$ \emph{regular} graphs (as opposed to graphs with degree at most $d$)
    This is because in \cite{alon2017graph} it is shown that $\a_n(G) \leq O(1/\sqrt{d})$ for every $d$-regular graph $G$.
    In particular, this implies that it is easy to approximate $\a_n(G)$ within a factor of $O(\sqrt{d})$ for $d$-regular graphs.
    \end{enumerate}

%\begin{corollary}
%    Given a bipartite graph $G$ with $n$ vertices in each part,
%    where the degree of each vertex is at most $d$
%    it is $\NP$-hard to approximate  $\a_n(G)$ within a factor $\log^2(d)/d$?.
%\end{corollary}

%%%%%%%%%%%%%%%%%%%%%%%%%%%%%%%%%%%%%%%%%%%%%%%%%%%%%%%%%%%%%%%
\section{Hardness results for computing $\a_{\leq k}(G)$}
%%%%%%%%%%%%%%%%%%%%%%%%%%%%%%%%%%%%%%%%%%%%%%%%%%%%%%%%%%%%%%%

Here we prove that it is hard to calculate the parameter $\a_{\leq k}(G)$.

%%%%%%%%%%%%%%%%%%%%%%%%%%%%%%%%%%%%%%%%%%%%%%%%%%%%%%%%%%%%%%%
\subsection{Hardness results for computing $\a_{\leq n}(G)$}
%%%%%%%%%%%%%%%%%%%%%%%%%%%%%%%%%%%%%%%%%%%%%%%%%%%%%%%%%%%%%%%

We first consider the $k = n$ case.

\begin{theorem}
Given a bipartite graph $G = (A,B,E)$ with $|A|=|B|=n$, it is $\NP$-hard to compute $\a_{\leq n}(H)$.
\end{theorem}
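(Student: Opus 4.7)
The plan is to reduce from the problem of deciding whether a bipartite graph $G$ with $n$ vertices on each side satisfies $\nu_c(G) = n$, which was proven to be $\NP$-hard in \cref{thm:hardness-perfect-connected-matching}. The reduction is trivial: the output is $G$ itself. The content is showing that $\alpha_{\leq n}(G) = 1/n$ if and only if $\nu_c(G) = n$, so that an exact algorithm for $\alpha_{\leq n}$ immediately decides the connected perfect matching question.

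The key observation driving this equivalence, which is also noted in the ``Our techniques'' paragraph of the introduction, is that for any $\ell$ such that $G$ contains a matching of size $\ell$, one has $\alpha_\ell(G) \geq 1/\ell$, with equality holding if and only if $G$ contains a \emph{connected} matching of size $\ell$ (since connected matchings are precisely the matchings whose largest induced submatching has size one). This handles the YES case: if $\nu_c(G) = n$, then $\alpha_n(G) = 1/n$, and since each term in the minimum defining $\alpha_{\leq n}(G)$ satisfies $\alpha_\ell(G) \geq 1/\ell \geq 1/n$, we obtain $\alpha_{\leq n}(G) = 1/n$.

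For the NO case I would argue that if $\nu_c(G) \leq n - 1$, then in fact $\alpha_{\leq n}(G) \geq 2/n$. Indeed, for every $\ell \leq n-1$ we have $\alpha_\ell(G) \geq 1/\ell \geq 1/(n-1) > 1/n$, so only the term $\alpha_n(G)$ requires attention. If $G$ has no matching of size $n$ at all, then by definition $\alpha_n(G) = 1$. Otherwise, any matching $M$ of size $n$ is not a connected matching (since $\nu_c(G) < n$), so by definition of connected matching it contains two edges not connected by any edge of $G$, i.e.\ an induced submatching of size $2$; this yields $\alpha_n(G) \geq 2/n$.

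Combining the two cases, computing $\alpha_{\leq n}(G)$ exactly distinguishes between the YES case value $1/n$ and the NO case lower bound $2/n$, so it decides whether $\nu_c(G) = n$ and is therefore $\NP$-hard; as a bonus, the same reduction establishes $\NP$-hardness of approximating $\alpha_{\leq n}(G)$ within any factor strictly less than $2$. There is no real obstacle here: the only subtlety is keeping track of the $\ell < n$ terms of the minimum, which is resolved by the strict inequality $1/\ell > 1/n$.
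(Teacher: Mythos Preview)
Your approach is exactly the paper's: reduce from the connected-perfect-matching problem of \cref{thm:hardness-perfect-connected-matching} via the equivalence $\alpha_{\leq n}(G)=1/n \iff \nu_c(G)=n$, and the main hardness statement is correctly established.

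There is, however, a slip in your NO-case analysis. You assert that $\nu_c(G)\leq n-1$ implies $\alpha_{\leq n}(G)\geq 2/n$, dismissing the terms $\ell\leq n-1$ because $\alpha_\ell(G)\geq 1/(n-1)$. But $1/(n-1)<2/n$ once $n\geq 3$, so these terms are \emph{not} automatically above $2/n$. Concretely, if $G$ happens to contain a connected matching of size $n-1$ (which is entirely compatible with $\nu_c(G)=n-1$), then $\alpha_{n-1}(G)=1/(n-1)$ and hence $\alpha_{\leq n}(G)\leq 1/(n-1)<2/n$. The correct NO-case bound is $\alpha_{\leq n}(G)\geq \min\{2/n,\,1/(n-1)\}=1/(n-1)$. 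Since $1/(n-1)>1/n$, the exact-computation hardness still goes through unchanged, but your ``bonus'' claim of inapproximability within any factor strictly less than $2$ does not follow from this reduction; the gap you actually get is only $n/(n-1)\to 1$.
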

\begin{proof}
It is immediate that $\a_{\leq n}(H) \geq 1/n$ and that equality holds if and only if $H$ contains a connected matching of size $n$. The theorem follows from \cref{thm:hardness-perfect-connected-matching}.
\end{proof}

We proceed and consider approximating $\a_{\leq n}(G)$.
\begin{theorem}
Unless $\NP = \coRP$, there is no polynomial algorithm for approximating $\a_{\leq n}(H)$ within some constant factor.
\end{theorem}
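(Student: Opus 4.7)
The plan is to reduce from \cref{thm:connected-match-inapprox}, exploiting the following structural inequality that tightly couples $\a_{\leq n}$ with $\nu_c$:
\[
    \min\!\left(\tfrac{1}{\nu_c(G)},\, \tfrac{2}{n}\right) \;\leq\; \a_{\leq n}(G) \;\leq\; \tfrac{1}{\nu_c(G)}.
\]
The upper bound is witnessed by a maximum connected matching $M^*$, in which no two edges form an induced sub-matching, so $\a_{|M^*|}(G) = 1/|M^*|$. For the lower bound we split on the matching size $\ell$: for $\ell \leq \nu_c(G)$, trivially $\a_\ell(G) \geq 1/\ell \geq 1/\nu_c(G)$; and for $\ell > \nu_c(G)$, every matching of size $\ell$ fails to be a connected matching, so it contains two edges not joined by a third edge of $G$. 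These two edges form an induced sub-matching of size $2$, yielding $\a_\ell(G) \geq 2/\ell \geq 2/n$. Taking the minimum over $\ell \leq n$ gives the stated lower bound.

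Given this coupling, it then suffices to invoke \cref{thm:connected-match-inapprox} to produce a gap in $\nu_c(G)$, which directly translates to a gap in $\a_{\leq n}(G)$. For the parameter $\epsilon$ of \cref{thm:connected-match-inapprox}, the YES case gives $\a_{\leq n}(G) \leq 1/\nu_c(G) \leq n^{-(1-\epsilon)}$, while in the NO case, provided $\epsilon$ is small enough that $n^{\epsilon} < n/2$, the structural lower bound gives $\a_{\leq n}(G) \geq 2/n$. The ratio of the NO lower bound to the YES upper bound is $2 n^{-\epsilon}$, a constant strictly greater than $1$ whenever $n^{\epsilon} < 2$. Since \cref{thm:connected-match-inapprox} is established via a randomized polynomial-time reduction under $\NP \neq \coRP$, any polynomial-time constant-factor approximation algorithm for $\a_{\leq n}$ would then distinguish YES from NO, contradicting the assumption.

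The main obstacle is calibrating $\epsilon$: as stated, \cref{thm:connected-match-inapprox} gives hardness for any \emph{constant} $\epsilon > 0$, yet the gap $2 n^{-\epsilon}$ from the above bookkeeping tends to $0$ as $n \to \infty$ for any fixed $\epsilon$. To resolve this, either (i) one appeals to a quantitatively stronger form of the underlying maximum clique inapproximability, tolerating sub-constant $\epsilon$, so that the $\boxminus$ construction of \cref{claim:reduction-cm} produces NO-case graphs with $\nu_c(G)$ bounded by a slowly enough growing function that $n^{\epsilon} < 2$ remains valid; or (ii) one adapts the reduction so that the NO case directly yields $\nu_c(G) \leq n/c$ for a chosen constant $c > 2$ while maintaining YES $\nu_c(G) \geq n/c'$ for some $c' < c$, for instance via a gap-amplifying padding or gadget applied to a constant-factor clique inapproximability starting point. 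Either path plugs into the structural bound above to deliver the desired constant-factor inapproximability of $\a_{\leq n}$ under randomized reductions, i.e., unless $\NP = \coRP$.
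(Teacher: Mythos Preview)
Your structural inequality
\[
    \min\!\left(\tfrac{1}{\nu_c(G)},\, \tfrac{2}{n}\right) \;\leq\; \a_{\leq n}(G) \;\leq\; \tfrac{1}{\nu_c(G)}
\]
is correct and is essentially the backbone of the paper's argument as well. You also correctly diagnose the obstacle: plugging the $n^{1-\eps}$ versus $n^{\eps}$ gap of \cref{thm:connected-match-inapprox} into this inequality only yields a ratio of $2n^{-\eps}$, which vanishes.

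The gap in your write-up is that neither proposed fix is actually carried out, and fix~(ii) as stated is mis-calibrated. You ask for NO-case $\nu_c(G)\leq n/c$ with $c>2$ and YES-case $\nu_c(G)\geq n/c'$ with merely $c'<c$. But with $c>2$ the binding term in your lower bound is $2/n$, so the NO case only gives $\a_{\leq n}\geq 2/n$, while the YES case gives $\a_{\leq n}\leq c'/n$. This produces a gap only when $c'<2$, not merely $c'<c$; and you do not say how to arrange $c'<2$. (Fix~(i), pushing $\eps$ sub-constant, is speculative and is not the route the paper takes.)

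The paper resolves this exactly by controlling where the $\min$ is attained. It starts from a constant-factor clique hardness (distinguishing $\omega(G)\geq b n$ from $\omega(G)\leq a n$) and first \emph{pads} the instance by adding a full clique joined to all vertices, so that one may take $1/2<a<b<1$. Applying the $\boxminus$ reduction of \cref{claim:reduction-cm} with the random half-cover then gives, with high probability, YES $\nu_c\geq b n$ and NO $\nu_c\leq (a+\delta)n$ with $a+\delta>1/2$. Because both thresholds exceed $n/2$, in your inequality the $1/\nu_c$ term dominates the $2/n$ term on the NO side, yielding $\a_{\leq n}\geq \tfrac{1}{(a+\delta)n}$ there and $\a_{\leq n}\leq \tfrac{1}{bn}$ on the YES side, hence a constant gap $b/(a+\delta)>1$. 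The single missing idea in your proposal is this padding step forcing $\nu_c$ above $n/2$ so that the $2/n$ floor in your structural bound is never the bottleneck.
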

\begin{proof}
We first use the fact that it is \NP-hard to distinguish between $n$-vertex graphs with cliques of size $b \cdot n$ to graphs with no clique of size $a \cdot n$ where $a,b$ are some constants satisfying $1/2<a<b<1$. Indeed it is well known that there are $a,b \in (0,1)$ such that it is \NP-hard to distinguish between $n$-vertex graphs with cliques of size $b \cdot n$ and graphs with no clique of size $a \cdot n$ (e.g.~\cite{haastad2001some}). The fact now follows by taking a graph $G$ of $n$ vertices, adding to it a clique of size $n$ and connecting all vertices in this clique to all vertices of $G$.

Given a graph $G$ apply the reduction in \cref{claim:reduction-cm} (with $H$ being the random graph described in \cref{claim:random-half-cover-small-nu-c}) and call the resulting graph $G'$.
If there is a clique $G$ of size $b \cdot n$ then clearly $\a_{\leq n}(G') \leq \frac{b}{n}$. Suppose there is no clique of size $a \cdot n$ in $G$. Then by \cref{claim:random-half-cover-small-nu-c}, with high probability there is no connected matching in $G'$ of size greater than $(a+\delta )\cdot n$ where $\delta>0$ can be taken to
be arbitrarily small. It follows that for $c>a+\delta$, every connected matching in $G$ contains a induced matching of size at least $2$. Therefore, for $(a + \delta)<c<1$
we have that conditioned on the existence of a matching of size $k$, $\a_{k}(G')=\frac{2}{c n}> \frac{1}{(a+\delta) \cdot n}$. Indeed, $\frac{2}{c}>\frac{1}{a+\delta}$ as $a+\delta>1/2$. As for $k<(a+\delta) n$ it clearly holds that $\a_{k}(G')>\frac{1}{(a+\delta)n},$ we have that in this case
$\a_{\leq n}(G') = \frac{a+\delta}{n}$. This implies that approximating $\a_{\leq n}(H)$ within a ratio smaller than $\frac{b}{a+\delta}$ in polynomial time would allow one to determine whether $G$ contains a clique of size $b \cdot n$ or no clique of size $a \cdot n$. Taking $\delta$ such that $\frac{b}{a+\delta}>1$ concludes the proof.

\end{proof}

%%%%%%%%%%%%%%%%%%%%%%%%%%%%%%%%%%%%%%%%%%%%%%%%%%%%%%%%%%%%%%%
\subsection{Hardness results for computing $\a_{\leq k}(G)$ for $k < n$}
%%%%%%%%%%%%%%%%%%%%%%%%%%%%%%%%%%%%%%%%%%%%%%%%%%%%%%%%%%%%%%%

We now turn to the problem of proving hardness of approximation results for $\a_{\leq k}(G)$ for $k<n$; for certain values of $k$, we show that $\a_{\leq k}(G)$ is $\NP$-hard to approximate to within any constant factor under randomized reduction. One approach to prove this is to use the reduction in \cref{thm:alpha-n-hardness}. However, this approach does not seem to work, as it allows one to consider also matchings that contain ``diagonal edges" of the form $(u_i,v'_j)$ and it is not clear how to apply the analysis in \cref{thm:alpha-n-hardness} to such matchings.
Instead, we build upon the hardness of the connected matching problem given in \cref{thm:connected-match-inapprox}.
We claim that the reduction in \cref{thm:connected-match-inapprox} shows that it is hard to approximate $\a_{\leq k}(G)$ for $k= n^{1-\eps}$.
Note that in the YES-case, if $\nu_c(G) = k \geq n^{1-\eps}$, then $\a_{\leq k}(G) = 1/k$.
The NO-case is a bit subtle, and it is, a priori, not clear why $\nu_c(G) \leq n^{\eps}$ implies that any matching of size at most $k$ contains a large induced matching. We resolve this problem using the following Ramsey-theoretic fact (see e.g.,~\cite{boppana1992approximating,erdos1935combinatorial}).

\begin{fact}\label{fact:ramsey}
Let $G$ be an $n$-vertex graph not containing a clique of size $k+1$ and suppose $k \geq 2\log n$. Then $G$ contains an independent set of size at least $s=\log n/\log (k/\log n)$.
\end{fact}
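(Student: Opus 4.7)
The plan is to derive this as a straightforward consequence of the classical Erdős--Szekeres Ramsey bound $R(a,b) \le \binom{a+b-2}{a-1}$, which asserts that any graph on at least $\binom{a+b-2}{a-1}$ vertices must contain either a clique of size $a$ or an independent set of size $b$. Setting $a = k+1$ and $b = s$, it suffices to verify that $n \ge \binom{k+s-1}{s-1}$: since $G$ has no $K_{k+1}$ by hypothesis, the Ramsey bound then forces the desired independent set of size $s$.

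First I would use the standard estimate $\binom{k+s-1}{s-1} \le \bigl(e(k+s-1)/(s-1)\bigr)^{s-1}$. The assumption $k \ge 2\log n$ ensures $s \le \log n \le k/2$, so $k+s-1 = O(k)$ and the binomial coefficient is at most $(O(k/s))^{s-1}$. Taking logarithms, the inequality $\binom{k+s-1}{s-1} \le n$ reduces to checking $(s-1)\log(k/s) \lesssim \log n$, which follows by direct substitution of $s = \log n / \log(k/\log n)$, using $\log(k/s) = \Theta(\log(k/\log n))$ in this parameter regime.

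The main obstacle is matching the exact constant in the stated bound: a naive application of Erdős--Szekeres yields $s = \Omega(\log n / \log(k/\log n))$ rather than equality as written, and to obtain the bound as stated one may need to either absorb an implicit constant into the expression or invoke the Erdős--Szekeres recursion $R(a,b) \le R(a-1,b) + R(a,b-1)$ together with a more careful inductive book-keeping of the intermediate Ramsey values to tighten the constant. Since the Fact is invoked only to show that ``small'' cliques in the NO instance still leave room for an induced matching of non-trivial size, the asymptotic form of the inequality is all that is actually needed downstream, and I would phrase the proof accordingly.
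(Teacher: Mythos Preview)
Your approach is correct and is exactly the intended one: the paper does not supply its own proof of this fact but simply cites it from the literature (Erd\H{o}s--Szekeres \cite{erdos1935combinatorial} and Boppana--Halld\'orsson \cite{boppana1992approximating}), so the Erd\H{o}s--Szekeres bound $R(a,b)\le\binom{a+b-2}{a-1}$ with $a=k+1$, $b=s$ is precisely the route implied by the citations. Your observation that a direct application yields the bound only up to a constant, and that only this asymptotic form is actually used in the proof of \cref{thm:alpha-k-constant-hardness}, is accurate and matches how the paper deploys the fact.
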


Coupled with \cref{thm:alpha-n-hardness} we prove the following result.

\begin{theorem}\label{thm:alpha-k-constant-hardness}
For any constants $\eps \in (0,1/2)$ and $\rho > 1$, it is $\NP$-hard (under randomized reduction) to approximate $\a_{\leq k}(G)$ within a factor of $\rho$
on bipartite graphs with $n$ vertices on each side for $k = n^{1-\eps}$.
\end{theorem}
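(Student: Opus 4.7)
The plan is to reduce from the hardness-of-approximation result for connected matching given in \cref{thm:connected-match-inapprox}. Given the target parameters $\eps \in (0,1/2)$ and $\rho > 1$, I would first fix a constant $\eps' > 0$ with $\eps' \le \min\{\eps/2,\, 1/(2(\rho+1))\}$. By \cref{thm:connected-match-inapprox}, it is $\NP$-hard under a randomized reduction to distinguish bipartite graphs $G_0$ with $N$ vertices on each side satisfying $\nu_c(G_0) \geq N^{1-\eps'}$ from those with $\nu_c(G_0) \leq N^{\eps'}$. Given such a $G_0$, I would build $G$ by padding both sides with isolated vertices until each side has $n := \lceil N^{(1-\eps')/(1-\eps)} \rceil$ vertices; then $k := n^{1-\eps}$ equals $N^{1-\eps'}$ up to rounding, and $\nu_c(G) = \nu_c(G_0)$ since isolated vertices contribute no edges.

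In the YES case, $G$ contains a connected matching $M$ of size $k$; since every two edges of $M$ are joined by an edge in $G$, no induced submatching of $M$ has more than one edge, so $\a_k(G) = 1/k$ and hence $\a_{\leq k}(G) \leq 1/k$. The substantial work is in the NO case, where the goal is to show $\a_{\leq k}(G) \geq \rho/k$, which supplies the desired multiplicative $\rho$-gap. Fix $\ell \leq k$ and a matching $M$ of size $\ell$, and introduce the \emph{conflict graph} $H_M$ on vertex set $M$, in which two edges of $M$ are adjacent iff they are joined by some edge of $G$. Cliques in $H_M$ are precisely connected submatchings of $M$, so $\omega(H_M) \leq \nu_c(G) \leq N^{\eps'}$, while independent sets in $H_M$ are precisely induced submatchings. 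For $\ell \leq k/\rho$ the trivial bound $|M'| \geq 1$ already gives $|M'|/|M| \geq 1/\ell \geq \rho/k$. For $\ell > k/\rho$ it suffices to show $|M'| \geq \rho$, which follows from \cref{fact:ramsey}: $H_M$ contains an independent set of size at least $\log \ell / \log(N^{\eps'}/\log \ell)$; plugging in $\log \ell \geq (1-\eps')\log N - \log \rho$ and $\log(N^{\eps'}/\log \ell) \leq \eps' \log N$ yields a lower bound of $(1-\eps')/\eps' - o(1) \geq 2\rho + 1 - o(1)$, which exceeds $\rho$ for $N$ sufficiently large.

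The main obstacle will be the parameter tuning: the Ramsey bound in \cref{fact:ramsey} only yields a logarithmic-sized independent set in terms of $\ell$, so $\eps'$ must be chosen small (as a function of $\rho$) in order to push the resulting constant $(1-\eps')/\eps'$ past the threshold $\rho$, while simultaneously satisfying $\eps' < \eps$ so that padding to $n \geq N$ is well-defined and produces a valid instance with $k = n^{1-\eps}$. The choice $\eps' = \min\{\eps/2,\, 1/(2(\rho+1))\}$ is designed to balance both constraints, and the only remaining verifications are routine: that \cref{fact:ramsey} is applicable (i.e.\ $N^{\eps'} \geq 2\log \ell$ for $\ell \leq k$, which holds for large $N$ since $\eps'$ is a positive constant) and that the lower-order $\log \log N$ corrections in the Ramsey ratio do not consume the slack between $2\rho+1$ and $\rho$.
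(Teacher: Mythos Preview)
Your proposal is correct and follows essentially the same approach as the paper: reduce from \cref{thm:connected-match-inapprox}, in the YES case use a connected matching of size $k$ to force $\a_{\leq k}(G)\le 1/k$, and in the NO case use the Ramsey bound (\cref{fact:ramsey}) on the conflict graph $H_M$ (whose cliques are connected submatchings) to extract an induced matching of constant size $\ge \rho$. The only cosmetic difference is that the paper avoids your padding step by invoking \cref{thm:connected-match-inapprox} directly with asymmetric thresholds $n^{1-\eps}$ and $n^{\delta}$ for $\delta=1/(2\rho)$, so that the hard instance already has the desired $k=n^{1-\eps}$; your padding achieves the same effect.
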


\begin{proof}
By \cref{thm:connected-match-inapprox} given a bipartite graph $G$ it is $\NP$-hard to distinguish between the case where $\nu_c(G) \geq n^{1-\eps}$, and the case where $\nu_c(G)  \leq n^{\delta}$ for $\delta = 1/(2\rho)$.
%We show below that this implies that $\MT$ is $\NP$-hard to approximate within any constant factor for $k = n^{1-\eps}$.

For the YES-case if $\nu_c(G) \geq n^{1-\eps}$, then clearly $\a_{\leq k}(G) = 1/k$ for $k = n^{1-\eps}$.

In the NO-case suppose that $\nu_c(G) \leq n^\delta$, and consider an arbitrary matching $M$ of size $s$ with  $s \leq k$. If $s< 2\delta k$ then clearly $M$ contains an induced matching of size at least $s/(2\delta k)$. Otherwise, contract all edges in $M$.
Denote by $H(M)$ the subgraph induced by the $s$ contracted nodes.
Observe that a subset of nodes in $H(M)$ forms a clique if and only if their corresponding edges in $G$ form a connected matching.  Otherwise,
by the assumption that $\nu_c(G) \leq n^\delta$ we get that $H(M)$ contains no clique of size $n^{\delta}$.
Hence, by \cref{fact:ramsey} we conclude that $H(M)$ contains an independent set of size
at least $\frac{\log s}{\log(n^\delta/\log s)} \geq \frac{1}{2\delta}$ (assuming $n$ is sufficiently large).

Therefore, given a bipartite graph $G$ with $n$ vertices on each side, and $k = n^{1-\eps}$
it is $\NP$-hard to distinguish between the YES-case of $\a_{ \leq k}(G) \leq 1/k$,
and the NO-case of $\a_{\leq k}(G) \geq 1/(2\delta k) = \rho/k$. This concludes the proof.
%Therefore, the corresponding maximization problem in $\coNP$-hard, as required.
\end{proof}

We can achieve stronger hardness results under stronger assumptions than $\NP$-hardness. Recall that
the Exponential Time Hypothesis (ETH) postulates that no algorithm of running time $2^{o(n)}$ can decide whether an $n$-variable SAT formula has a satisfying assignment. Assuming ETH we have the following hardness result:

\begin{theorem} \label{thm:almost-poly}
Assuming ETH there exists a $k$ such that given $H = (A,B,E)$ with $|A|=|B|=n$ there is no polynomial time algorithm that approximates $\a_{\leq k}(H)$ within
a factor of $n^{(1/\log \log n)^c}$ where $c>0$ is a universal constant independent of $n$.
\end{theorem}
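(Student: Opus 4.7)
The plan is to follow the same two-step architecture as \cref{thm:alpha-k-constant-hardness} (first get hardness for connected matching via \cref{claim:reduction-cm}, then apply the Ramsey-type analysis of \cref{fact:ramsey} to convert that into a lower bound on $\alpha_{\leq k}$), but with a quantitatively stronger starting point obtained from ETH rather than from plain $\NP$-hardness.

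\textbf{Step 1 (ETH hardness of clique in the almost-polynomial regime).} I would invoke an ETH-based inapproximability result for MAX CLIQUE: there is a constant $c' > 0$ and $\mu = \mu(N) = (1/\log\log N)^{c'}$ such that, assuming ETH, no polynomial-time algorithm distinguishes $N$-vertex graphs $H$ with $\omega(H) \ge N^{1-\mu}$ from those with $\omega(H) \le N^{\mu}$. Such a statement is obtainable either by invoking existing almost-polynomial inapproximability results for clique / independent set (e.g.\ via the graph-product / gap-amplification machinery à la Chalermsook--Laekhanukit--Nanongkai), or by birthday-repetition applied to an appropriate gap PCP in the spirit of Manurangsi's ETH-hardness of Densest $k$-Subgraph. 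This is essentially the same input that \cref{thm:connected-match-inapprox} uses (Zuckerman's clique hardness), replaced by its ETH-strengthening.

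\textbf{Step 2 (push to connected matching).} I would then feed the resulting clique instance into \cref{claim:reduction-cm} using the random bipartite half-cover $\HC_N$ of \cref{claim:random-half-cover-small-nu-c}. The reduction is randomized and polynomial-time, so it preserves ETH-hardness (ETH is robust under randomized polynomial-time reductions). The output is a balanced bipartite graph $G' = H \boxminus \HC_N$ on $N$ vertices per side for which it is ETH-hard to decide between $\nu_c(G') \ge N^{1-\mu}$ and $\nu_c(G') \le N^{\mu} + O(\log N) = O(N^\mu)$.

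\textbf{Step 3 (convert to $\alpha_{\le k}$ via Ramsey).} Set $n = N$ and $k = n^{1-\mu}$, and mimic the case analysis in the proof of \cref{thm:alpha-k-constant-hardness}. In the YES case $\nu_c(G') \ge k$ yields immediately $\alpha_{\le k}(G') \le 1/k$. In the NO case, for any matching $M \subseteq E(G')$ of size $s \le k$, either $s$ is below a threshold and a single edge certifies the required ratio, or else I contract $M$, apply \cref{fact:ramsey} to the conflict graph $H(M)$ (which has $s$ vertices and no clique of size $O(n^\mu)$ because $\nu_c(G') \le O(n^\mu)$), and extract the required induced sub-matching. Tuning the constants so that the exponent in \cref{fact:ramsey} matches the target shows that $\alpha_{\le k}(G')$ is at least $n^{(1/\log\log n)^c}/k$ for a suitable constant $c$ depending on $c'$. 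Combining the YES and NO cases yields the claimed inapproximability ratio.

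\textbf{Main obstacle.} The Ramsey-based bookkeeping in Step 3 is essentially the same as in \cref{thm:alpha-k-constant-hardness}, so the technical heart of the argument is Step 1: obtaining an ETH-hardness result for clique at exactly the almost-polynomial scale $(1-\mu, \mu)$ with $\mu = (1/\log\log n)^{c'}$. This is the step where the additional assumption (ETH, as opposed to $\P \ne \NP$ or $\NP \ne \coRP$) is used in an essential way, and where care is needed to thread the parameters through the half-cover reduction, the randomized reduction bookkeeping, and the Ramsey estimate in such a way that the final ratio stated in the theorem emerges.
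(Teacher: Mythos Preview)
Your Step~3 does not go through quantitatively. In the NO case you only know that the conflict graph $H(M)$ on $s\le k=n^{1-\mu}$ vertices has clique number at most $q=O(n^{\mu})$ with $\mu=(1/\log\log n)^{c'}$. Plugging this into \cref{fact:ramsey} gives an independent set of size only
\[
\frac{\log s}{\log(q/\log s)}\;\approx\;\frac{(1-\mu)\log n}{\mu\log n}\;\approx\;\frac{1}{\mu}\;=\;(\log\log n)^{c'},
\]
which is merely poly-$\log\log n$. The target ratio, however, is $n^{(1/\log\log n)^c}=\exp\!\big((\log n)/(\log\log n)^c\big)$, which is super-polylogarithmic for every fixed $c$. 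No choice of the constants $c,c'$ (or of $\mu$) can bridge this exponential gap: the Ramsey bound never produces an independent set larger than $O(\log n)$ from a mere clique-number upper bound, whereas the theorem asks for an induced matching of size roughly $n^{1/\mathrm{polyloglog}(n)}$. So the ``tuning the constants'' sentence in Step~3 is where the argument breaks.

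The paper avoids this obstacle by taking a different route altogether. It does \emph{not} pass through connected matching or half-covers; instead it reduces directly from the ETH-hardness of balanced biclique due to Manurangsi, whose NO instances satisfy a \emph{density} condition: every $k'$-vertex subgraph has at most $\binom{k'}{2}/n^{(1/\log\log n)^c}$ edges. Contracting a matching $M$ then gives a graph of average degree $O(k'\,n^{-(1/\log\log n)^c})$, and Tur\'an's bound (\cref{lem:turan_independent_set}), not Ramsey, yields an independent set of size $\Omega(n^{(1/\log\log n)^c})$. The crucial point is that a density hypothesis on all small subgraphs is exponentially stronger than a clique-number bound, and that is exactly what is needed to reach the almost-polynomial ratio.
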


We will rely on the following simple lower bound on independent sets in graphs of average degree $d_{avg}$ due to Turan.

\begin{lemma}\label{lem:turan_independent_set}
Every $n$-vertex graph with average degree $d_{avg}$ contains an independent set of size at least $\frac{n}{d_{avg}+1}$.
\end{lemma}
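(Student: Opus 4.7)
The plan is to prove this via the standard Caro–Wei probabilistic argument, followed by an application of Jensen's inequality to pass from the per-vertex bound to the average-degree bound.

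First, I would pick a uniformly random linear ordering $\pi$ of $V(G)$ and define the random set
\[
I_\pi = \{v \in V(G) : v \text{ precedes every neighbor of } v \text{ under } \pi\}.
\]
The first key observation is that $I_\pi$ is always an independent set: for any edge $(u,v)$, exactly one of $u,v$ precedes the other in $\pi$, so at least one of them has a neighbor that comes earlier and is therefore excluded from $I_\pi$. Hence $I_\pi$ spans no edges, regardless of the ordering sampled.

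Next, I would compute the expected size of $I_\pi$. For each vertex $v$ of degree $d(v)$, whether $v \in I_\pi$ depends only on the relative order of the $d(v)+1$ vertices in $\{v\} \cup N(v)$, and by symmetry $v$ is first among them with probability $1/(d(v)+1)$. Linearity of expectation then gives
\[
\E[|I_\pi|] \;=\; \sum_{v \in V(G)} \frac{1}{d(v)+1}.
\]
In particular, some ordering $\pi^\ast$ achieves $|I_{\pi^\ast}| \geq \sum_v 1/(d(v)+1)$, yielding an independent set of at least this size.

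Finally, I would invoke Jensen's inequality for the convex function $x \mapsto 1/(x+1)$ on $[0,\infty)$: since the average of $d(v)$ over $v$ equals $d_{\mathrm{avg}}$,
\[
\frac{1}{n}\sum_{v \in V(G)} \frac{1}{d(v)+1} \;\geq\; \frac{1}{\tfrac{1}{n}\sum_v d(v) + 1} \;=\; \frac{1}{d_{\mathrm{avg}}+1},
\]
so $\sum_v 1/(d(v)+1) \geq n/(d_{\mathrm{avg}}+1)$, completing the proof. There is no real obstacle here; the only subtle point is remembering to apply Jensen to the convex function $1/(x+1)$ rather than, say, bounding each term by $1/(d_{\mathrm{avg}}+1)$ directly (which would go the wrong way for vertices of above-average degree).
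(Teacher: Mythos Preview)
Your argument is correct: it is the standard Caro--Wei probabilistic proof followed by Jensen's inequality, and every step is sound. Note that the paper does not actually supply a proof of this lemma; it simply states it as the well-known Tur\'an lower bound and uses it as a black box, so there is no ``paper's own proof'' to compare against. Your write-up therefore fills in a classical fact the authors assumed.
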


\begin{proof}[Proof of Theorem~\ref{thm:almost-poly}]
It is known \cite{manurangsi2017almost} that assuming ETH for $k=n^{1-1/\mathrm{polyloglog}(n)}$ there is no polynomial algorithm that distinguishes between the
case where $H$ contains a bipartite clique with $t$ vertices on each side (YES-case) to the case where every subgraph contained in $H$ with $k' \leq k$ vertices satisfies $|E(H)| \leq {k' \choose 2}/n^{(1/\log \log n)^c}$ (NO-case).
In the first case $\a_{\leq k}(H)=1/k$. In the second case, given a matching $M$ with $|M|=k;$ and $k' \leq k$ we claim that $M$ contains an induced matching of size $\Omega(\max((k'n^{-(1/\log \log n)^c},1))$. The claim is trivially true if $k' \leq n^{(1/\log \log n)^c}$ hence assume $k' > n^{(1/\log \log n)^c}$. Let $H(M)$ be the graph induced on $M$ and let $H'(M)$ be the graph obtained after all edges in $M$ are contracted. Clearly the average degree of $H'(M)$ is $O(k'n^{-(1/\log \log n)^c})$ (see Lemma 2.1 in \cite{alon2017graph}) hence by \cref{lem:turan_independent_set} it contains an independent set $I'$ of size $\Omega(n^{(1/\log \log n)^c})$. It is easily verified that this independent set corresponds to an induced matching contained in $M$ whose size is $\Omega(n^{(1/\log \log n)^c})$. Therefore every matching of size at most $k' \leq k$ contains an induced matching of size $\Omega(\lceil k'n^{-(1/\log \log n)^c})\rceil)$
which implies that $\a_{\leq k}(H)=\Omega (n^{(1/\log \log n)^c}/k)$. It follows that if we could approximate $\a_{\leq k}(H)$ within a factor better than $\Omega(n^{(1/\log \log n)^c}))$ in polynomial time then we could distinguish between the YES and NO cases described above. This concludes the proof.
\end{proof}

%%%%%%%%%%%%%%%%%%%%%%%%%%%%%%%%%%%%%%%%%%%%%%%%%%%%%%%%%%%%%%%
\section{Improved construction of multitaskers}\label{sec:construction}
%%%%%%%%%%%%%%%%%%%%%%%%%%%%%%%%%%%%%%%%%%%%%%%%%%%%%%%%%%%%%%%

In this section we prove the following theorem.

\begin{theorem}\label{thm:alpha->1/2}
Let $d \leq n$ be positive integers such that $n$ is sufficiently large, and let $\eps \in (0,1)$ be such that $\eps \geq \frac{20 \log d}{\log n}$.
Then, there is a bipartite graph $G$ with $n$ vertices on each side and average degree at least $d/2$, such that
$\a_{\leq k}(G)\geq 1/2-\eps$ for $k = (\frac{1}{101e^5})^{4/\epsilon} \cdot \frac{n}{d^{1+8/\epsilon}}=\frac{n}{d^{1+O(1/\epsilon)}}$.
\end{theorem}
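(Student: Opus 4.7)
The plan is to take $G$ to be a bipartite graph on $n+n$ vertices with average degree at least $d/2$ and girth at least $g$, where $g = \Theta(1/\epsilon)$. Such a graph exists by standard methods: for example, starting from a random bipartite graph of edge density roughly $d/n$ and deleting an edge in each short cycle, or by explicit algebraic constructions. A Moore-type bound forces $n = \Omega(d^{g/2}) = \Omega(d^{O(1/\epsilon)})$, which is precisely the regime in which the target bound $k = n/d^{1+O(1/\epsilon)}$ is nontrivial.

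Fix any matching $M \subseteq E(G)$ with $m := |M| \leq k$, and form the contracted graph $H = H(M)$, whose vertex set is $M$ and in which $e, f \in M$ are adjacent iff $G$ has an edge between $V(e)$ and $V(f)$ other than $e$ or $f$. Induced submatchings of $M$ are in bijection with independent sets of $H$, so it suffices to exhibit an independent set of $H$ of size at least $(1/2-\epsilon)m$. Two structural facts about $H$ are immediate. First, the maximum degree of $H$ is at most $2(d-1)$, since each endpoint of an edge $e \in M$ has at most $d-1$ neighbors in $G$ besides its partner, and each such neighbor lies in at most one other edge of $M$. Second, $H$ has girth at least $\lceil g/2 \rceil$: any $\ell$-cycle $e_1, \dots, e_\ell$ in $H$ lifts to a closed walk in $G$ of length $2\ell$ alternating between $M$-edges and their connecting edges, and this walk contains a cycle of $G$ of length at least $g$.

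To produce the desired independent set I would take a uniformly random $2$-coloring $\chi \colon V(H) \to \{0,1\}$ and retain the larger color class after deleting one endpoint of each monochromatic edge inside it. A routine expectation computation gives an independent set of size at least $m/2 - |E(H)|/4$, so it suffices to prove $|E(H)| \leq 4\epsilon m$, equivalently $|E(G[V(M)])| \leq (1+4\epsilon)m$. This is where the girth assumption becomes decisive: with $g = \Theta(1/\epsilon)$ and $|V(M)| \leq 2k$ lying well below the Moore threshold $\sim d^{g/2}$, one expects $G[V(M)]$ to be essentially tree-like above the matching $M$, so that only $O(\epsilon m)$ extra edges can appear inside $V(M)$.

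The main obstacle is this final \emph{almost-matching} bound on $|E(G[V(M)])|$, which is strictly sharper than what the pure local-sparsity argument in \cite{alon2017graph} gives (the latter loses a constant factor via Turan). My plan is to prove it by an inductive Moore-type count of how many edges can fit inside a small induced subgraph of a bipartite graph with girth $\Theta(1/\epsilon)$, then calibrate the $O(1/\epsilon)$ exponent in $k = n/d^{1+O(1/\epsilon)}$ so that the count remains $(1+O(\epsilon))m$ uniformly for every matching $M$ of size at most $k$. Once the edge bound is in hand, the random-coloring step immediately delivers the required induced submatching of size $(1/2-\epsilon)m$ and completes the proof.
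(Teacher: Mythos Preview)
Your proposal has a genuine gap at the ``almost-matching'' edge bound $|E(H)|\le 4\eps m$ on which the random-coloring step hinges. This bound is false for the graphs in question. For any $m\le k$, take a path $a_1b_1a_2b_2\cdots a_mb_m$ of length $2m-1$ in $G$ (such paths exist with high probability in the random model and survive the short-cycle deletions) and set $M=\{(a_i,b_i):i\le m\}$. Then the contracted graph $H$ is itself a path on $m$ vertices, so $|E(H)|=m-1$, not $O(\eps m)$. With $|E(H)|\approx m$ your random $2$-coloring yields an independent set of size only about $m/2-m/4=m/4$, far short of $(1/2-\eps)m$. (A path of course has an independent set of size $\lceil m/2\rceil$; the point is that the crude coloring bound $m/2-|E(H)|/4$ is blind to this.)

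Two things underlie the gap. First, no Moore-type count from girth alone gives even the weaker bound $|E(G[V(M)])|\le(2+O(\eps))m$: the Bondy--Simonovits bound for a girth-$g$ subgraph on $2m$ vertices is only $O(m^{1+O(1/g)})=O(m^{1+O(\eps)})$, which is useless once $m$ exceeds a constant depending on $\eps$, whereas here $m$ ranges up to $k=n/d^{1+O(1/\eps)}$. In the paper this weaker edge bound comes instead from a union bound over all balanced vertex subsets in the random bipartite model, and that argument genuinely cannot be pushed below average degree~$2$ (the $\binom{n}{s}^{2}$ choices overwhelm the tail probability). Second---and this is the real fix---the girth of $H$ must be used \emph{inside} the independent-set extraction, not merely to bound $|E(H)|$. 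The paper does exactly this: greedily peel off vertices of degree $\le 1$ (each step adds one vertex to the independent set and removes at most two from $H$); on the residual $h$ vertices the minimum degree is $\ge 2$ while the edge count is at most $(1+O(\eps))h$ by the local-sparsity bound, so only $O(\eps h)$ vertices have degree $\ge 3$; delete those, and what remains is a disjoint union of paths and of cycles, the latter each of length at least $g/2$, which admits an independent set of size $(\tfrac12-O(1/g))$ times its order. Combined with the $(m-h)/2$ vertices from the peeling phase this gives the required $(\tfrac12-O(\eps))m$.
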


For the proof of \cref{thm:alpha->1/2} we need the following lemma.
We remark that a similar result also appears in \cite{alon2017graph} (proof of Theorem 4.14 in the arXiv version).

\begin{lemma}\label{lemma:high-girth-sparse-graphs-are-good-multitaskers}
Let $G=(A,B,E)$ be a balanced bipartite graph, and let $g$ be the girth of $G$.
Let $t \in \N$ be such that for every subset of vertices $T \seq A \cup B$ satisfying $|T \cap A|= |T \cap B| \leq s \leq t$ it holds that $|E(T)| \leq (2+\beta/g)s$ edges for some $\beta > 0$.
Then $\a_{\leq t}(G)\geq \frac{1}{2} - \frac{1+\beta}{g}$.
\end{lemma}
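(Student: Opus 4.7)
The plan is to reduce the problem, for each matching $M$ of size $s \le t$ in $G$, to finding a large independent set in an auxiliary graph $H$ whose vertices are the edges of $M$ and whose edges correspond to pairs of matching edges joined by some non-matching edge of $G$. Since independent sets in $H$ correspond bijectively to induced submatchings of $M$, it suffices to prove $\alpha(H) \ge \bigl(\tfrac12 - (1+\beta)/g\bigr)\,s$.

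I would first extract the two structural facts about $H$ provided by the hypothesis and by the girth of $G$. Applying the density condition to the endpoints of any $U \subseteq V(H)$ with $|U| \le t$ (which contributes $|U|$ vertices on each side of $G$) and then subtracting the $|U|$ matching edges themselves yields $|E_H(U)| \le (1+\beta/g)|U|$, so $H$ is hereditarily sparse, with average subgraph degree just above $2$. Second, I would show that the girth of $H$ is at least $g/2$ by a lifting argument: a cycle of length $\ell$ in $H$, together with a representative $G$-edge for each $H$-edge of the cycle, produces a subgraph of $G$ on $2\ell$ vertices with $2\ell$ edges, whose cyclomatic number is at least $1$; any cycle found therein has length at most $2\ell$ but at least $g$, forcing $\ell \ge g/2$.

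With these two facts in hand, I would construct the independent set component-by-component. A tree component with $n_i$ vertices contributes at least $n_i/2$ via its bipartition. For a non-tree component $C_i$, the girth bound forces $n_i \ge g/2$, while hereditary sparsity yields at most $m_i - (n_i - 1) \le (\beta/g)\,n_i + 1$ extra edges beyond a spanning tree $F_i$. Fixing the bipartition $(A_i, B_i)$ of $F_i$ and taking the better of the two candidate independent sets obtained by removing one endpoint per bad edge in each side, an averaging step $\max \ge \text{mean}$ gives $\alpha(C_i) \ge (n_i - e_A - e_B)/2$, where $e_A,e_B$ count the extra edges inside each side. Substituting the bound on extra edges and using $n_i \ge g/2$ to absorb an additive $\tfrac12$ into the main $1/g$ term yields $\alpha(C_i) \ge \bigl(\tfrac12 - (1+\beta)/g\bigr)\,n_i$; summing over components finishes the proof.

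I expect the non-tree case to be the main obstacle. The naive approach of picking only the larger bipartition side and removing all bad edges there loses a factor of $2$, producing only $\bigl(\tfrac12 - (\beta+2)/g\bigr)\,n_i$, which is short of the target. The $\max \ge \text{mean}$ averaging halves this loss, and the girth lower bound $n_i \ge g/2$ is exactly what is needed to convert the constant additive slack into an $O(1/g)$ multiplicative loss, matching the statement.
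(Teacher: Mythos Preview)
Your proposal is correct and takes a genuinely different route from the paper's proof. Both arguments start identically: form the auxiliary graph $H$ on the edges of $M$, observe that $H$ has girth at least $g/2$ and that every $r$-vertex subgraph of $H$ has at most $(1+\beta/g)r$ edges, and then seek a large independent set in $H$.

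From there the arguments diverge. The paper runs an iterative peeling process: repeatedly pick a vertex of degree at most $1$, add it to the independent set, and delete it and its neighbor. When this stalls on a subgraph of $h$ vertices with minimum degree $2$, the average-degree bound forces at most $2\beta h/g$ vertices of degree $\ge 3$; deleting those leaves a disjoint union of paths and long cycles, from which a near-half independent set is read off and added to the $(s-h)/2$ already collected. Your argument instead works component-by-component with no iteration: trees give $n_i/2$ for free via their bipartition, and for a non-tree component you take a spanning-tree $2$-coloring, bound the number of monochromatic extra edges by $(\beta/g)n_i + 1$, delete one endpoint per such edge from each color class, and average. The girth bound $n_i \ge g/2$ then absorbs the additive $\tfrac12$ into the $1/g$ term, matching the target exactly.

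Both approaches yield the same constant. The paper's peeling argument is closer in spirit to the standard greedy proofs for sparse graphs and makes the role of the ``$2$-core'' explicit. Your spanning-tree-bipartition argument is slightly more static and arguably cleaner bookkeeping-wise; it also makes transparent that only the \emph{odd} extra edges (those within a color class) cost anything, whereas extra edges crossing the bipartition are harmless---a structural point the peeling argument does not isolate.
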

\begin{proof}
Let $G = (A,B,E)$ with $|A|=|B|=n$ that satisfies the assumptions in the lemma.
and let $M$ be a matching in $G$ of size of size $s \leq t$ .
We show that $M$ contains an induced matching $M'$ of size at least $(\frac{1}{2}-\frac{1+\beta}{g})|M|$.

Let $F$ be the graph whose vertices correspond to the $s$ edges of $M$,
and two vertices in $F$ are connected if the corresponding edges are connected by an edge in $G$.
We show below that $F$ contains an independent set on nearly half of its vertices.
By the assumptions of the claim, the girth of $F$ is at least $g/2$,
and any set of $s$ of its vertices spans at most $(1+\beta/g)s$ edges.
Construct an independent set in $F$ as follows.
As long as $F$ contains a vertex of degree at most $1$ add it to the independent set,
and omit it and its unique neighbor from $F$.
Suppose that this process stops with $h$ vertices.
This implies that the independent set so far has at least $(s-h)/2$ vertices.
If $h=0$ we are done, as the independent set has at least $s/2$ vertices.
Otherwise, in the induced subgraph of $F$ on the remaining $h$ vertices the minimum degree is at least
$2$ and the average degree is at most $2+2\beta/g$. Hence it contains
at most $2 \beta h/g$ vertices of degree at least $3$. Omit these vertices.
The remaining graph is a union of paths and cycles, which may
contain odd cycles, but all cycles in it are of length at least $g/2$.
Therefore this part contains an independent set of size at least $\frac{1}{2} (1-2 \beta /g) \cdot (1 - 2 /g) h$,
which together with the $(s-h)/2$ vertices obtained in the initial process result with an
independent set of size at least
\[
\frac{s-h}{2}+ \frac{1}{2} (1-2 \beta/g) \cdot (1 - 2 /g)h
>
\frac{s-h}{2}+ \frac{1}{2} (1- 2 \beta/g - 2 /g)h
>
\frac{s}{2}-\frac{1+\beta}{g}h
\geq
(\frac{1}{2}-\frac{1+\beta}{g})s
,
\]
as required.
\end{proof}

%\begin{remark}
%    Note that for constant $d$ and $n$ sufficiently large, we get a graph with with $\a_{\leq k}(G) \geq 1/2 - o_n(1)$ for $k = n^{0.99}$.
%\end{remark}
We can now prove \cref{thm:alpha->1/2}.
\begin{proof}
We start a random bipartite graph $G'$ with $n$ nodes on each side, in which each edge is included independently with probability $p=d/n$.
The following two claims prove the properties required in order to apply \cref{lemma:high-girth-sparse-graphs-are-good-multitaskers}.

\begin{claim}\label{claim:Gnp-short-cycles}
Let $g$ be an even integer such that $2/\eps \leq g \leq 4/\eps$.
Then, with probability $1-\frac{2}{n^{0.3}} \geq 0.99$ the number of cycles of length at most $g$ is upper bounded by $\sqrt n$.
\end{claim}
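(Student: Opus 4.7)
The plan is to apply the first moment method to the number of cycles of length at most $g$ in the random bipartite graph $G'$, and then conclude via Markov's inequality. Since $G'$ is bipartite, all cycles have even length, so let $X_\ell$ denote the number of cycles of length exactly $2\ell$ in $G'$, and let $X = \sum_{\ell=2}^{g/2} X_\ell$ be the total number of cycles of length at most $g$ (noting $\ell \geq 2$ as there are no digons).

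First I would compute $\mathbb{E}[X_\ell]$. The number of potential $2\ell$-cycles in $K_{n,n}$ is at most $\tfrac{1}{4\ell} \cdot n^{2\ell} \cdot \ell!^2 / (\ell!)^2 \le \tfrac{n^{2\ell}}{4\ell}$ (choosing $\ell$ vertices on each side with an alternating cyclic order, quotienting by rotations and reflections), and each such cycle is present in $G'$ with probability $p^{2\ell} = (d/n)^{2\ell}$. Therefore $\mathbb{E}[X_\ell] \le d^{2\ell}/(4\ell)$, and summing gives
\[
\mathbb{E}[X] \;\le\; \sum_{\ell=2}^{g/2} \frac{d^{2\ell}}{4\ell} \;\le\; \frac{g}{4} \cdot d^{g}.
\]

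Next I would use the hypothesis $\epsilon \geq 20\log d / \log n$ together with $g \leq 4/\epsilon$ to bound $d^g$. This gives $g \log d \leq (4/\epsilon) \log d \leq \tfrac{1}{5}\log n$, hence $d^{g} \leq n^{1/5}$. Similarly $g \leq 4/\epsilon \leq \log n /(5\log d) \leq \log n$, so $\mathbb{E}[X] \leq \tfrac{1}{4}(\log n) \cdot n^{1/5} \leq n^{1/5 + o(1)}$, which is at most $n^{0.2}$ for $n$ sufficiently large (absorbing the $\log n$ factor into a tiny increase).

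Finally, Markov's inequality yields
\[
\Pr[X > \sqrt{n}] \;\le\; \frac{\mathbb{E}[X]}{\sqrt{n}} \;\le\; \frac{n^{0.2}}{n^{0.5}} \;=\; \frac{1}{n^{0.3}},
\]
which is (comfortably) at most $2/n^{0.3}$, as desired. The only mildly delicate point is keeping track of the logarithmic slack between $d^g \leq n^{1/5}$ and the target $n^{0.2}$; this is easily handled since the assumption $\epsilon \geq 20\log d/\log n$ leaves a healthy constant factor of room in the exponent, so the extra factor of $g \leq \log n$ is absorbed either by tightening the constant in the exponent or by assuming $n$ is large enough. No genuine obstacle arises; the whole argument is a routine expected-cycle-count plus Markov computation, and the quantitative assumption on $\epsilon$ was cooked exactly to make this work.
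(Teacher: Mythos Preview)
Your approach is exactly the paper's: bound the expected number of short cycles and apply Markov. Two small slips worth fixing: (i) in the bipartite setting only the $\ell$ even rotations of a $2\ell$-cycle preserve the bipartition, so the automorphism group has order $2\ell$, giving the upper bound $n^{2\ell}/(2\ell)$ rather than $n^{2\ell}/(4\ell)$ (e.g.\ $K_{10,10}$ has $2025$ four-cycles, exceeding $10^4/8$); and (ii) bounding the sum termwise by $(g/4)\,d^g$ costs a factor of $g=O(\log n)$ that does \emph{not} fit under the stated $2/n^{0.3}$ --- the paper instead uses the geometric series $\sum_{s\le g/2} d^{2s} \le 2d^{g} \le 2n^{1/5}$ (valid for $d\ge 2$), which yields the constant cleanly.
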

\begin{proof}
The expected number of cycles of length up to $g$ is upper bounded by
\[
\sum_{s=2}^{g/2}{n\choose s}^2(s!)^2p^{2s} \leq \sum_{s=2}^{g/2}(np)^{2s} \leq \sum_{s=2}^{2/\epsilon}d^{2s} \leq 2d^{4/\epsilon}.
\]
In particular, for $\eps \geq \frac{20 \log d}{\log n}$
the expected number of cycles of length up to $g$ is at most $2d^{4/\epsilon} \leq 2 n^{1/5}$.
The claim follows by Markov's inequality.
\end{proof}

\begin{claim}\label{claim:G(n,p)-low-avg-degree}
With probability $0.99$, every subgraph of $G'$ with at most $(\frac{1}{101e^5})^{4/\epsilon}\cdot n/d^{1+8/\epsilon}$ nodes on each side has average degree at most $(2+\epsilon/4)$.
\end{claim}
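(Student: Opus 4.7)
The approach is a direct first-moment union bound. Fix $s \leq t := (1/(101e^5))^{4/\epsilon} \cdot n/d^{1+8/\epsilon}$, and consider subsets $T_A \subseteq A$, $T_B \subseteq B$ with $|T_A| = |T_B| = s$. The number $X$ of edges in the induced subgraph $G'[T_A, T_B]$ is distributed as $\mathrm{Bin}(s^2, p)$ with $p = d/n$, and the induced average degree exceeds $2 + \epsilon/4$ exactly when $X > (2+\epsilon/4)s$. For $s \in \{1,2\}$ the average degree is trivially at most $2$ (at most $s^2 \leq 4$ edges among $2s$ vertices), so only $s \geq 3$ needs to be analyzed.

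Setting $c := 2 + \epsilon/4$ and $m := \lceil cs \rceil$, the standard edge-subset bound $\Pr[X \geq m] \leq \binom{s^2}{m} p^m \leq (es^2 p/m)^m$, combined with a union bound over the at most $\binom{n}{s}^2 \leq (en/s)^{2s}$ balanced pairs $(T_A, T_B)$, gives
\[
\Pr\bigl[\exists\,T_A,T_B \text{ bad at size } s\bigr]
\leq \left[\left(\frac{en}{s}\right)^{2} \left(\frac{esd}{cn}\right)^{c}\right]^{s}
= \left[\frac{e^{4 + \epsilon/4}}{c^{c}} \cdot d^{2} \cdot \left(\frac{sd}{n}\right)^{\epsilon/4}\right]^{s},
\]
where the equality just collects exponents using $c - 2 = \epsilon/4$. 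Since $e^{4 + \epsilon/4} < e^{5}$ and $c^{c} \geq 4$, the bracket is at most $(e^{5}/4) \cdot d^{2} \cdot (sd/n)^{\epsilon/4}$.

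For $s \leq t$, one computes $(sd/n)^{\epsilon/4} \leq (1/(101 e^{5}))/d^{2} = 1/(101 e^{5} d^{2})$, so the bracket is at most $1/(4 \cdot 101) = 1/404$, and the failure probability at size $s$ is bounded by $(1/404)^{s}$. Summing over $s \in \{3,\dots,t\}$ yields a total failure probability of at most $2 \cdot (1/404)^{3} \ll 0.01$, which together with \cref{claim:Gnp-short-cycles} sets up the hypothesis of \cref{lemma:high-girth-sparse-graphs-are-good-multitaskers}. The only real obstacle is calibrating constants: the exponent $1 + 8/\epsilon$ on $d$ in $t$ arises because $(sd/n)^{\epsilon/4}$ must absorb the $d^{2}$ factor in the bracket (forcing $sd/n \leq 1/d^{8/\epsilon}$ up to constants), and the prefactor $(1/(101 e^{5}))^{4/\epsilon}$ is tuned to defeat $e^{5}/4$ and leave room for a geometric tail well below $0.01$.
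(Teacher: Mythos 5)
Your proof is correct and takes essentially the same approach as the paper: a first-moment union bound over balanced vertex subsets of each size $s$, a binomial tail bound on the induced edge count, and a geometric series over $s$. The only differences are cosmetic---you handle $s\in\{1,2\}$ trivially and retain the $c^{c}$ factor from $\binom{s^2}{m}\leq (es^2/m)^m$, yielding a slightly sharper per-size bound of $(1/404)^s$ where the paper settles for $(1/101)^s$.
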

\begin{proof}
Let $s$ be an integer satisfying $1 \leq s \leq (\frac{1}{101e^5})^{4/\epsilon}\cdot n/d^{1+8/\epsilon}$.
By the union bound over all subsets of $G'$ with $s$ vertices on each side, the probability that $G'$ contains a balanced subgraph with $s$ nodes on each side and average degree at least $(2+\epsilon/4)$ is
\[
{n \choose s}^2{s^2 \choose (2+\epsilon/4)s}p^{(2+\epsilon/4)s}
\leq
\left(\frac{n e}{s}\right)^{2s} \cdot \left( s e \right)^{(2+\eps/4)s} \cdot \left( \frac{d}{n}\right)^{(2+\eps/4)s}
\leq
\left(\frac{e^5 d^{2+\epsilon/4}s^{\epsilon/4}}{n^{\epsilon/4}}\right)^s
\leq
\left(\frac1{101}\right)^s.
\]
By taking the union bound,over all values of $s$ we get that the probability that
$G'$ contains a dense induced subgraph is at most $\sum_{s=1}^\infty\left(\frac1{101}\right)^s=0.01$, as required.
\end{proof}

By Chernoff bound with probability $0.99$, $G'$ contains at least $0.9 dn$ edges.
Therefore, with probability $0.97$ the latter event occurs together with the events in the two foregoing lemmas.

Let $g \in [\frac{2}{\eps},\frac{4}{\eps}]$ be an even integer, as in \cref{claim:Gnp-short-cycles}.
We remove an edge from each cycle of length at most $g$, thus removing at most $\sqrt{n}$ edges, so that the average degree remains at least $d/2$.
The resulting graph $G$ satisfies the conditions of \cref{lemma:high-girth-sparse-graphs-are-good-multitaskers} with $g \in [\frac{2}{\eps},\frac{4}{\eps}]$
and $t = (\frac{1}{101e^5})^{4/\epsilon}\cdot n/d^{1+8/\epsilon}$,
and hence $\a_t(G) \geq 1/2 - 2/g \leq 1/2 - \eps$, as required. This concludes the proof of \cref{thm:alpha->1/2}.
\end{proof}

\begin{remark}
We note that if we consider $\a_{\leq n}(G)$ instead of $\a_{\leq n/d^{1+O(1/\epsilon)}}(G)$,
then for the construction in the proof of \cref{thm:alpha->1/2}
it holds that $\a_{\leq n}(G)=O(\frac{\ln d}{d}+O(1/\sqrt{n}))$ with high probability.
Indeed, it can be shown that prior to deletions $G'$ has a matching of size $\Omega(n)$ and no induced matching of size larger that $O(\frac{\ln d}{d}n)$ with high probability. Therefore, since removing $\sqrt{n}$ edges can increase the size of any induced matching by at most $\sqrt{n}$,
we get that the entire construction satisfies $\a_{\leq n}(G)=O(\frac{\ln d}{d}+1/\sqrt{n})$.
We also remark that \cite{alon2017graph} described a construction with average degree $d=\Omega(\log \log n)$ for which $\a_{\leq n} > 0.33$.
\end{remark}

%%%%%%%%%%%%%%%%%%%%%%%%%%%%%%%%%%%%%%%%%%%%%%%%%%%%%%%%%%%%%%%
\subsection{Is $\alpha_k(G)=1/2$ attainable?}
%%%%%%%%%%%%%%%%%%%%%%%%%%%%%%%%%%%%%%%%%%%%%%%%%%%%%%%%%%%%%%%

The foregoing positive result obtains $\alpha_{\leq k}(G)=1/2-\epsilon$ for $k=O(n/d^{1+O(1/\epsilon)})$, approaching the natural barrier $1/2$.
A natural question is whether $1/2$ can be attained exactly, and for which values of $k$.
We now show the following limitation.

\begin{theorem}
There are absolute constants $C>0$ and $\epsilon_0>0$ such that for $n,d$ sufficiently large and $k\geq C\cdot n/d^{1+\epsilon_0}$, every graph $G$ with $n$ nodes on each side and average degree $d$ has $\alpha_{\leq k}(G)$ strictly smaller than $1/2$.
\end{theorem}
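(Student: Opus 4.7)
The plan is to prove, for any balanced bipartite graph $G$ with $n$ vertices per side and average degree $d$, the existence of a matching $M$ of odd size $\ell$ whose contracted graph $H(M)$ contains an odd cycle $C_\ell$ as a spanning subgraph. Since $\alpha(C_\ell) = (\ell - 1)/2 < \ell/2$ and adding edges to $H(M)$ only decreases its independence number, this forces $\alpha_\ell(G) \leq (\ell-1)/(2\ell) < 1/2$, and hence $\alpha_{\leq k}(G) < 1/2$ whenever $k \geq \ell$.

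The key structural ingredient is the Bondy--Simonovits extremal bound $\mathrm{ex}(N, C_{2j}) = O(j \cdot N^{1+1/j})$ for every integer $j \geq 2$. Applied to $G$ (with $N = 2n$ vertices and $|E(G)| = nd$), it guarantees a copy of $C_{2j}$ in $G$ whenever $d$ exceeds a threshold of order $j \cdot n^{1/j}$. Let $j^\star \geq 3$ be the smallest \emph{odd} integer satisfying this bound; then $j^\star = O(\log n / \log d)$, and $G$ contains a cycle $C = v_0 v_1 \cdots v_{2j^\star-1} v_0$ of length $2j^\star \equiv 2 \pmod{4}$. I take the alternating matching $M := \{(v_{2i}, v_{2i+1}) : 0 \leq i \leq j^\star-1\}$ of size $\ell = j^\star$, which is a valid matching in $G$ since $C$ has even length in a bipartite graph. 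The non-matching edges of $C$, namely $(v_{2i+1}, v_{(2i+2) \bmod 2j^\star})$ for $0 \leq i \leq j^\star - 1$, after contracting $M$ form exactly the odd cycle $C_{j^\star}$ on the contracted vertices $e_0, \ldots, e_{j^\star-1}$ of $H(M)$; any chord of $C$ in $G$ only adds edges to $H(M)$, which can only decrease $\alpha$. Thus $\alpha(H(M)) \leq \alpha(C_{j^\star}) = (j^\star-1)/2 < \ell/2$.

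Finally, the bound $\ell = O(\log n / \log d)$ is easily at most $k$ whenever $k \geq C n / d^{1+\epsilon_0}$, for any fixed sufficiently small $\epsilon_0 > 0$ (say $\epsilon_0 = 1$), appropriate universal constant $C$, and sufficiently large $n, d$; the borderline regime $d \gtrsim n^{1/3}$ is handled at once by $j^\star = 3$, producing a bad matching of size $3$. The main obstacle in executing this plan is the parity condition: Bondy--Simonovits gives even cycles of any specified even length $2j$, but one must carefully select an \emph{odd} $j^\star$ so that the contracted cycle $C_{j^\star}$ is odd, which is precisely what produces the strict inequality $\alpha(H(M)) < \ell/2$ and hence $\alpha_{\leq k}(G) < 1/2$. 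A secondary technical point is that the Bondy--Simonovits constant depends (linearly) on $j$, so one must account for this mild dependence in verifying $j^\star = O(\log n / \log d)$ uniformly across the range of $d$ of interest.
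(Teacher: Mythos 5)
Your proof follows the same basic template as the paper's: find a cycle of length $\equiv 2 \pmod 4$, take its alternating matching $M$, contract to get an odd cycle, and conclude that the largest induced matching inside $M$ has size $(|M|-1)/2 < |M|/2$. That structural reduction is correct and is exactly what the paper does. The difference is in how you produce the short bad cycle, and this is where there is a genuine gap.

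You invoke Bondy--Simonovits, $\mathrm{ex}(N, C_{2j}) = O(j \cdot N^{1+1/j})$, and assert without further argument that the smallest odd $j^\star$ clearing the threshold satisfies $j^\star = O(\log n/\log d)$. But minimizing the bound $100\,j\,N^{1+1/j}$ over $j$ gives a value of order $N\log N$, attained near $j = \ln N$. Consequently, when the average degree satisfies $d \lesssim \log n$, the edge count $nd$ never exceeds the Bondy--Simonovits threshold for \emph{any} $j$, and the theorem yields no cycle of any prescribed even length. So your argument breaks down in the entire regime $r \le d \lesssim \log n$ ($r$ an absolute constant), which the theorem statement is supposed to cover ("$n,d$ sufficiently large" allows $d$ to be any fixed large constant). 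The paper avoids this by decoupling the two tasks: it first uses a Ramsey-type result (\cite{sudakov2016extremal}) guaranteeing a cycle of length $\equiv 2 \pmod 4$ in \emph{any} graph of average degree $\geq r$ (a constant, with no control on cycle length), and then uses \cite{feige2016generalized} to find a subgraph on $O(n/d^{1+2/(r-1)})$ vertices of average degree $\geq r$, so that the bad cycle already lives inside a small vertex set. Your argument, by contrast, tries to extract the cycle and its length bound simultaneously from a single extremal inequality, which is strictly more demanding and fails when $d$ is small.

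Two smaller points: (i) the parenthetical "say $\epsilon_0 = 1$" contradicts "sufficiently small $\epsilon_0 > 0$" and anyway you never verify that the constants can be chosen so that the matching size is at most $k$ across the whole admissible range of $d$; (ii) in the regime where Bondy--Simonovits \emph{does} apply ($d \gtrsim \log n$), your approach is genuinely stronger than the paper's bound, producing a bad matching of size only $O(\log n/\log d)$ rather than $O(n/d^{1+\epsilon_0})$, which partially confirms the paper's concluding speculation that $\alpha_{\le k}(G) < 1/2$ should hold even for polylogarithmic $k$ when $d$ is large; so the idea is worth keeping, but it cannot replace the paper's argument for small $d$.
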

\begin{proof}
One obstacle to obtaining $\alpha_{\leq k}(G)=1/2$ is cycles of length $2$ mod $4$.
In particular, consider a cycle of length $\ell=2k$, where $k$ is an odd integer.
It is straightforward to check that picking every other edge of the cycle yields a matching $M$ of size $k$, whose largest induced matching has size $(k-1)/2=(\frac12-\frac1{2k})|M|$. Hence, a graph $G$ containing such cycle has $\alpha_k(G)$ strictly less than $1/2$.
We now show that every $n$-vertex graph with average degree $d$ contains such a cycle for some $k = O(n/d^{1+\epsilon_0})$.

Let $G$ be a bipartite graph with $n$ nodes on each side and average degree $d$.
It is known (see e.g.,~\cite{sudakov2016extremal}) that there is an absolute constant $r>0$ such that any graph with average degree at least $r$ contains a cycle of length $2$ mod $4$, which we call a ``bad'' cycle.
By~\cite{feige2016generalized}, $G$ contains a subgraph $G'$ on at most $\ell=O(n/d^{1+2/(r-1)})$ nodes whose average degree is at least $r$. Hence $G'$ (and therefore $G$) contains a bad cycle whose length is at most $\ell$.
As per above, this implies that $\alpha_{\leq k}(G)$ is strictly less than $1/2$ for $k=\ell/2$.
\end{proof}

We do not know if $\alpha_{\leq k}(G)=1/2$ is achievable for the value $k$ considered in \cref{thm:alpha->1/2}.
We suspect that $\alpha_{\leq k}(G)<1/2$ holds for all for graphs of large enough average degree $d$
even for $k = \poly\log n$. Whether this is indeed the case is left as an open question.
%%%%%%%%%%%%%%%%%%%%%%%%%%%%%%%%%%%%%%%%%%%%%%%%%%%
\section{Conclusion and future directions}
%%%%%%%%%%%%%%%%%%%%%%%%%%%%%%%%%%%%%%%%%%%%%%%%%%%
We have studied the computational complexity of computing $\a_k(G)$, a parameter that arises in wireless networks and
connectionist models of multitasking. Our study reveals that algorithmic as well as combinatorial questions
(such as the existence of graphs with certain combinatorial properties) are relevant to connectionist models of cognitions,
and we hope that future work will reveal more connections between such models, theoretical computer science and combinatorics.

While we have shown that computing $\a_k(G)$ is intractable, our results do not rule out the existence of an
efficient constant factor approximation algorithm for $\alpha_{\leq n}(G)$, which could potentially be used in
in computer simulations and in analyzing behavioral and neuroscientific data. Whether such an algorithm exists
is an interesting direction for future study.

We conclude with several specific questions arising from this work.

\begin{itemize}
\item We believe that for $d$-regular graphs the upper bound $\alpha_{\leq n}(G) \leq 9/ \sqrt{d}$ is not tight.
It is an open problem whether for all $d$-regular graphs it holds that $\alpha_{\leq n}(G) \leq o(1/\sqrt{d})$,
and it is possible that $\alpha_{\leq n}(G)=O(\frac{\log d}{d})$ holds.
\item It would be interesting to see if the $n^{1-\epsilon}$ hardness of approximation result can be obtained assuming $\P \neq \NP$
(that is, under a deterministic reduction).
In particular, it would be interesting to find efficient and deterministic constructions of bipartite half-covers with maximal connected matching upper bounded by $n^{o(1)}$.
\item Finally, understanding how well one can multitask on a ``small" number of tasks is of interest. This raises the question of fixed-parameter algorithms for $\a_k$ and connected matchings, where $k$ is a parameter independent of $n$.
\end{itemize}

\bibliographystyle{alpha}
\bibliography{approx}
\appendix

%%%%%%%%%%%%%%%%%%%%%%%%%%%%%%%%%%%%%%%%%%%%%%%%%%%%%%%%%%%%%%%
\section{$\NP$-hardness of computing the maximum connected matching of a graph}
\label{sec:hardness-deterministic-reduction-cm}
%%%%%%%%%%%%%%%%%%%%%%%%%%%%%%%%%%%%%%%%%%%%%%%%%%%%%%%%%%%%%%%

In this section we that given a bipartite graph it is NP-hard to compute $\nu_C(G)$ exactly under a deterministic polynomial time reduction.
This is as opposed to the randomized reduction given in \cref{thm:connected-match-inapprox}.
We remark that \cite{plummer2003special} proved this result for the non-bipartite case.
Our proof is an adaptation of their proof to the bipartite case.

\begin{theorem}\label{thm:exact_connected}
It is \NP-hard to determine given a bipartite graph $G=(A,B)$ and a parameter $k$ whether $G$ contains a connected matching of size $k$.
\end{theorem}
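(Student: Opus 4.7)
The plan is to adapt the $\NP$-hardness reduction of \cite{plummer2003special} to the bipartite setting. A natural starting point is a reduction from \textsc{Clique}: given an $n$-vertex graph $H=(V,E)$ and parameter $k$, construct a bipartite graph $G=(A,B,E_G)$ with $A=\{a_v : v \in V\}$, $B=\{b_v : v \in V\}$, diagonal edges $(a_v, b_v)$ for each $v \in V$, and for each edge $\{u,v\} \in E$ a pair of cross edges $(a_u, b_v), (a_v, b_u)$ which connect the diagonals $(a_u, b_u)$ and $(a_v, b_v)$. The forward direction is immediate: a clique $C$ of size $k$ in $H$ yields the connected matching $\{(a_v, b_v) : v \in C\}$ in $G$, with every pair of its edges witnessed by a cross edge.

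The main obstacle is the reverse direction. A connected matching in $G$ need not consist only of diagonal edges; it may include off-diagonal edges $(a_u, b_v)$ with $u \neq v$, which encode individual edges of $H$ rather than a clique. In fact, even a short cycle in $H$ already allows such off-diagonal matchings: for instance, if $H$ is a $4$-cycle then the plain construction above admits a connected matching of size $3$ in $G$ (e.g., $\{(a_{v_1}, b_{v_2}), (a_{v_2}, b_{v_3}), (a_{v_3}, b_{v_4})\}$), although $H$ has no triangle. Thus the naive reduction fails and additional gadgetry is needed to rule out such spurious matchings.

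The resolution, following the template of \cite{plummer2003special}, is to augment the construction with anchoring gadgets that force every sufficiently large connected matching in $G$ into a canonical form consisting only of diagonal edges. Concretely, one attaches to each vertex $v$ a small bipartite gadget — for instance a bundle of forced matching edges or auxiliary anchor vertices incident only to $a_v$ and $b_v$ — so that any matching reaching the target size must include the canonical diagonal edge at $v$. Adapting Plummer's non-bipartite gadgets while preserving the bipartition of $G$ and without disturbing the connectivity analysis is the main technical step.

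Once canonical form is enforced the rest is straightforward: a connected matching of size $k$ in the resulting bipartite graph decomposes as $\{(a_{v_1}, b_{v_1}), \dots, (a_{v_k}, b_{v_k})\}$, and pairwise connectivity can only be supplied by cross edges, forcing every $\{v_i, v_j\}$ to be an edge of $H$. Hence $\{v_1, \dots, v_k\}$ is a clique of size $k$ in $H$, and the $\NP$-hardness of \textsc{Clique} yields \cref{thm:exact_connected}.
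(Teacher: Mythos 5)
Your reduction starts from \textsc{Clique} and encodes a vertex $v$ as a diagonal edge $(a_v,b_v)$ with cross edges for the edges of $H$. You correctly identify the central obstacle — off-diagonal edges can build spurious connected matchings — but the resolution you offer is not actually given: you gesture at ``anchoring gadgets'' or ``a bundle of forced matching edges or auxiliary anchor vertices incident only to $a_v$ and $b_v$'' without specifying them, and you explicitly defer the crucial step (``Adapting Plummer's non-bipartite gadgets while preserving the bipartition\ldots{} is the main technical step''). That step is precisely where the difficulty lies. In a \emph{connected} matching, every two edges must share a connecting edge; an anchor edge hanging off $a_v$ or $b_v$ and ``incident only to $a_v$ and $b_v$'' is either forced out of the matching (it conflicts with the diagonal for the same endpoint) or else, if included instead of the diagonal, it must be connected to every other chosen edge, which your local gadget does not provide. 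Making a local, bipartition-respecting gadget meet this global connectivity requirement is nontrivial, and nothing in the sketch shows it can be done. As it stands the proposal has a genuine gap: it reduces the theorem to an unproven claim about the existence of suitable bipartite anchors.

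The paper sidesteps this entirely by reducing from \emph{Balanced Biclique} rather than \textsc{Clique}, which is a better fit for the bipartite target. Given bipartite $G=(A,B)$ with $|A|=|B|=n$, one keeps a copy $(A_1,B_1)$ of $G$ and attaches two fresh bicliques $(A_1,A')$ and $(B_1,B')$ of size $n$, plus one more biclique $(A_2,B_2)$ that is completely joined to $A_1\cup B'$ on one side and to $B_1\cup A'$ on the other. These large bicliques automatically supply all the connectivity that filler edges need, so there is no need for per-vertex anchors: every connected matching of size $2n+k$ in the new graph forces $k$ matched pairs in $(A_1,A')$ and $k$ matched pairs in $(B_1,B')$ whose $A_1$-endpoints and $B_1$-endpoints must form a $k\times k$ biclique in $G$, and conversely a $k$-biclique in $G$ extends to a connected matching of size $2n+k$. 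If you want to salvage the \textsc{Clique}-based route, you would need to spell out the bipartite gadgets and verify the global connectivity condition; otherwise, switching to Balanced Biclique as the source problem gives a clean, self-contained reduction.
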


\begin{proof}
We reduce the biclique to the problem of determining if $\nu_C(G)=k$. Recall that a biclique $G'=(C',D')$ in a bipartite graph $G$ is a subgraph $G'$ of $G$ such that every vertex in $C'$ is connected to every vertex in $D'$. A biclique $(C',D')$ is balanced if $|C'|=|D'|$ The biclique problem is: given a bipartite graph $G = (A,B)$ (we assume that $|A| = |B|$) and integer $k$, is there a biclique $(A',B')$ with $A' \subseteq A, B'\subseteq B$ and $|A'| = |B'| = k$. This problem is well known to be \NP-complete.

Given a bipartite graph $G = (A,B)$ with $|A| = |B| = n$, form a new graph $H$ as follows. Initialize $H_1 = (A_1,B_1)$ to equal $G$ and we call this the copy of $G$ inside $H_1$. Then add a new set $A'$ of $n$ vertices such that $(A_1,A')$ forms biclique, and add a new set $B'$ of $n$ vertices such that $(B_1,B')$ forms a biclique. Initialize another graph $H_2 = (A_2,B_2)$ to be a biclique with $|A_2| = |B_2| = n$ (where $A_2,B_2$ are disjoint from $A_1\cup B_1 \cup A' \cup B'$). Add an edge between every vertex of $(A_1 \cup B')$ and every vertex of $B_2$, and add an edge between every vertex of $(B_1 \cup A')$ and every vertex of $A_2$. The resulting (bipartite) graph is $H = (A_1 \cup B' \cup A_2, B_1 \cup A' \cup B_2)$.

Consider a connected matching $M$ in $H$. Let $M_A \subseteq M$ be the set of all edges in $M$ contained in the biclique $(A_1,A')$ and let $M_B \subseteq M$ be the set all edges in $M$ contained in the biclique $(B_1,B')$, and let $M_r = M - (M_A \cup M_B)$. Then $|M| = |M_A| + |M_B| + |M_r|$. Let $X_A\subseteq A_1$ denote the set of vertices in $A_1$ being an endpoint of an edge in $M_A$, and let $X_B$ be analogously defined with respect to $B_1$ and $M_B$. Since $M$ is a connected matching, $(X_A, X_B)$ is a biclique. We also have $|M_r| \leq 2n - \max\{|X_A|,|X_B|\}$ which implies $|M| \leq 2n + \min\{|X_A|,|X_B|\}$ where we have used $|X_A| = |M_A|, |X_B| = |M_B|$. Thus, if $G$ has a connected matching of size $2n + k$ then $\min\{|X_A|,|X_B|\} \geq k$ which means that there is a biclique of size $k$.

Conversely, if $G$ contains a biclique $(R,S)$ of size $k$, we can easily form a connected matching $M$ in $H$ of size $2n + k$. To construct $M$, we take $k$ edges $M_A$ in $(A,A')$ with $X_A = R$, we take $k$ edges $M_B$ in $(B,B')$ with $X_B = S$, we take $n - k$ edges matching the $n - k$ vertices of $A_1 - X_A$ with $n - k$ vertices $B_2' \subseteq B_2$, we take $n - k$ edges matching the $n - k$ vertices of $B_1 - X_B$ with $n - k$ vertices $A_2' \subseteq A_2$, and we take $k$ edges matching $A_2 - A_2'$ with $B_2 - B_2'$.

Thus, $G$ contains a biclique of size $k$ if and only if $H$ contains a connected matching of size $2n + k$. This completes the proof.
\end{proof}
\end{document}